\newtheorem{TT}{Theorem}[section]
\newtheorem{DD}{Definition}[section]
\newtheorem{CC}{Corollary}[section]
\newtheorem{LL}{Lemma}[section]
\def\ket#1{\left|#1\right\rangle}
\begin{document}

\begin{titlepage}

\bigskip
\bigskip

\begin{center}
{\Large\bf Whittaker pairs for the Virasoro algebra\\
and the Gaiotto - BMT states}
\end{center}

\bigskip

\begin{center}
    {\large\bf
    Ewa Feli\'nska}\footnote{e-mail: letjenen@gmail.com
}, {\large\bf
    Zbigniew Jask\'{o}lski}\footnote{e-mail: jask@ift.uni.wroc.pl
}
and
{\large\bf
    Micha{\l} Koszto{\l}owicz}\footnote{e-mail: mikosz@ift.uni.wroc.pl
}
\\

\vskip 1mm
   Institute of Theoretical Physics\\
 University of Wroc{\l}aw\\
 pl. M. Borna, 50-204 Wroc{\l}aw,
 Poland
\end{center}
\vskip .5cm
\begin{abstract}
In this paper we analyze  Whittaker modules
for two families of Wittaker pairs
related to the subalgebras of the Virasoro algebra generated by $L_r,\dots, L_{2r}$
and $L_1,L_n$. The structure theorems for the corresponding universal Whittaker modules
are proved and some of their consequences are derived.
All the Gaiotto \cite{Gaiotto:2009ma} and the Bonelli-Maruyoshi-Tanzini \cite{Bonelli:2011aa} states
in an arbitrary Virasoro algebra Verma module are explicitly constructed.
\end{abstract}
\end{titlepage}
\section{Introduction}

Whittaker
modules were first introduced by Arnal and Pinzcon \cite{Arnal74} in their study of the ${\rm sl}_2(\mathbb{C})$ algebra representations.
For an arbitrary complex infinite-dimensional semisimple
Lie algebra the theory of Whittaker modules was developed by Kostant \cite{Kostant78}.
The construction was based on the triangular decomposition $\mathfrak{g}=\mathfrak{n}^-\oplus \mathfrak{h}\oplus\mathfrak{n}^+$
and a  regular Lie algebra homomorphisms $\phi:\mathfrak{n}^+ \to \mathbb{C}$. For non-regular homomorphisms
the Kostant construction was analyzed  in \cite{McDowell85,McDowell93,Milicic97}. More recently
the Whittaker modules have been intensively investigated
for various
infinite-dimensional algebras with a triangular decomposition:
the Heisenberg and the affine Lie algebras \cite{Christodoupoulou08},
the generalized Weyl algebras \cite{Benkart09},
the Virasoro algebra \cite{Ondrus09,Ondrus11},
the twisted Heisenberg-Virasoro algebra \cite{Liu10},
the Schr\"{o}dinger-Witt algebra \cite{Zhang10},
the graded Lie algebras \cite{Wang09a},
the $W$-algebra $W(2,2)$ \cite{Wang09b}
and
the Lie algebras of Block type \cite{Wang09}.

A general categorial framework for Whittaker modules was proposed in \cite{Batra} where
the Kostant construction  was generalized
to a pair of Lie algebras $\frak{n}\subset\frak{g}$.
In this setting  the
Whittaker module is defined as a $\frak{g}$-module, generated
by a one-dimensional $\frak{n}$-invariant subspace.

{}

In two-dimensional conformal field theory the interest in Whittaker modules was stimulated by Gaiotto's paper
\cite{Gaiotto:2009ma}
on a particular versions of the AGT relation
\cite{Alday:2009aq} where on the CFT theory side
so called irregular blocks appear\footnote{
In the context of instanton counting
the Whittaker vectors of affine algebras were analyzed earlier in \cite{Braverman:2004vv,Braverman:2004cr}.}.
They are
defined as a scalar products of vectors of a Virasoro algebra
Verma module determined by the conditions
\begin{equation}
\label{wcond1}
L_1\ket{w} = \mu_1\ket{w}, \;\;L_2\ket{w} = \mu_2\ket{w}, \;\;L_n\ket{w} = 0, \;\;n>2.
\end{equation}
It has been conjectured in \cite{Gaiotto:2009ma} that not only these vectors but also
 higher order vectors
\begin{equation}
\label{wcond2}
L_r\ket{w} = \mu_r\ket{w},\;\dots,\;L_{2r}\ket{w} = \mu_{2r}\ket{w}, \;\;L_n\ket{w} = 0, \;\;n>2r,\;\;r>1
\end{equation}
 exist and are uniquely defined by the conditions above. The existence of the first order Gaiotto states
has been soon verified by an explicite
construction  in \cite{Marshakov:2009gn}.
Another construction in terms of the Jack symmetric polynomials was given in \cite{arXiv:1003.1049}.
The corresponding irregular blocks  were the simplest objects for which the AGT relation could be verified
\cite{Poghossian:2009mk,Hadasz:2010xp}.
In particular
the norm of the  first order Gaiotto state with $\mu_2=0$ corresponds to
the pure gauge partition function in four dimensions.
This relation has been  analyzed for various extensions of the AGT relations
with the CFT side described by
the $A_n$-Toda theories \cite{Wyllard:2009hg,Mironov:2009by,Taki:2009zd},
the $N=1$ super-symmetric Liouville theories \cite{Belavin:2011pp,Bonelli:2011kv,Ito:2011mw},
the para-$A_n$-Toda theories
\cite{Nishioka:2011jk,Wyllard:2011mn}
and
the general Toda theories \cite{Keller:2011ek}.

From the point of view of the general approach of \cite{Batra} the Gaiotto conditions (\ref{wcond2})
define a Whittaker vector of the Whittaker pair ${\cal V}_r\subset {\cal V}$ where
$\cal V$ is the Virasoro algebra and ${\cal V}_r$ its subalgebra generated by $L_r,\dots, L_{2r}$.
Recently a new type of coherent states corresponding to the Whittaker pairs
${\cal V}_{1,n}\subset {\cal V}$ where
 ${\cal V}_{1,n}$ is the subalgebra generated by $L_1,L_{n}$
has been introduced by Bonelli, Maruyoshi and Tanzini in
\cite{Bonelli:2011aa}. It was proposed that  the corresponding
irregular conformal blocks describe  partition functions of wild quiver gauge theories.

In the present paper we analyze  general algebraic properties of the Gaiotto and the BMT states for the Virasoro algebra.
This includes
the problem of the structure of Whittaker modules for the pairs ${\cal V}_r\subset {\cal V}$ and ${\cal V}_{1,n}\subset {\cal V}$
and the construction of corresponding states.
Since ${\cal V}_1={\cal V}_{1,2}$ the lower orders coincide.
In this case  the structure of Whittaker modules is already known \cite{Ondrus09,Ondrus11}
and the Gaiotto  states are constructed \cite{Marshakov:2009gn,arXiv:1003.1049}.
Our aim is to extend these results to higher orders. 

In Section 2 we use the method of \cite{Ondrus09}
to analyze  modules of the higher order Whittaker pairs ${\cal V}_r\subset {\cal V}$.
In view of CFT applications we restrict ourselves
to modules with a fixed central charge. Our main result is the structure theorem
for the universal Whittaker module of a general type. We say that a Lie algebra homomorphism  $\psi_r:{\cal V}_r\to {\cal V}$
is of the high rank if $\psi_r(L_{2r})\neq 0$ or $\psi_r(L_{2r-1})\neq 0$.
 For the high rank homomorphisms  the corresponding universal Whittaker module is simple.
In all other cases
 it has an infinite composition series with a single
composition factor uniquely determined by $\psi_r$.

Section 3 is devoted to the  Whittaker pairs ${\cal V}_{1,n}\subset {\cal V}$.
Most of the techniques developed in Section 2 can be applied in this case as well.
Note that for a non-regular Lie algebra homomorphisms $\psi_{1,n}:{\cal V}_{1,n}\to \mathbb{C}$
($\psi_{1,n}(L_1)=0$ or $\psi_{1,n}(L_n)=0$) we are back to the case ${\cal V}_r\subset {\cal V}$.
We give a complete analysis of Whittaker vectors in the universal Whittaker module in the cases $n=3,4$
and present some examples for higher orders.

In Section 4 we investigate
subspaces of Whittaker vectors of a given pair and type
in the Virasoro module of all anti-linear functionals on a Verma module.
Whenever the Shapovalov form is non-degenerate
the Gaiotto and the BMT states can  be recovered by "raising indices" level by level by the inverse to the Gram matrix.
For the first order states this is the
construction of \cite{Marshakov:2009gn}.
One can  define irregular blocks directly in terms of forms and the scalar product
determined level by level by the inverse Gram matrices. This approach conveniently separates the universal algebraic
properties
 from those specific to the weight dependent Shapovalov form.

It turns out that the first order Gaiotto states are uniquely (up to a scale factor) defined by the
conditions (\ref{wcond1}). This is no longer true for higher orders.
We present general representation  of the Gaiotto and the BMT states
in terms of systems of basic states.

The two types of  Whittaker pairs considered in this paper are special cases of a more
general pair ${\cal V}_{m,n}\subset {\cal V}$ where the subalgebra ${\cal V}_{m,n}$
is generated by
$$
L_m, L_n, L_{n+1}, \dots, L_{n+m-1},\;\;\; m<n.
$$
One can easily invent even more general subalgebras. If they contain the subalgebra ${\cal V}_r$
for a certain integer $r$ the method of \cite{Ondrus09} we have used in the present paper
can be applied to analyze the corresponding Whittaker modules. One may expect
new classes of simple modules of the Virasoro algebra. It is an interesting question to what extend
this general construction might be helpful in solving the problem of classification of all simple modules of the Virasoro algebra.
\footnote{A complete answer was recently found by Mazorchuk and Zhao in \cite{MazorchukZhao}.}

There are other possible continuations of the present work.
For the pair ${\cal V}_r\subset {\cal V}$ the structure theorem implies that all high rank Whittaker
modules of a given type are isomorphic. For the low rank modules the classification problem
remains open. Theorems of Section 2
provide a good starting point but general analysis is still to be done.

The abundance of higher order Gaiotto states rises the question of their classification.
One may expect \cite{Gaiotto:2009ma} that at least some of them arise as decoupling limits
of $n$-point conformal blocks. It would be interesting to find their algebraic characterization.
Similar questions arise for  the BMT states.
Finally, in view  of the developments mentioned above the extensions to other algebras would be desirable.

When the present paper was completed we became aware of a series of papers \cite{Guo1104,Guo1103,Guo1105}.
In \cite{Guo1104}
irreducible
 Virasoro algebra modules were studied. In particular
all  Whittaker modules for higher order pairs ${\cal V}_r\subset {\cal V}$ were explicitly constructed
by twisting oscillator representation of the twisted Heisenberg-Virasoro algebra.
The construction implies  that all high rank Whittaker modules
of an arbitrary order are simple (Theorem 7 of \cite{Guo1104}). In the present paper we obtain
the same result using different methods (Corollary 2.2).
The other two papers are devoted to the Whittaker modules of
the generalized Virasoro algebras \cite{Guo1103}
and of the Virasoro-like algebras \cite{Guo1105}.

We owe special thanks to Volodymyr Mazorchuk and Kaiming Zhao for
sending us their work \cite{MazorchukZhao} and in particular
for pointing out that Theorems 3.5, 3.6 and Corollary 3.7 of Section 3
in the first published version of our paper are valid only in the case of $n=3$.
The present version contains necessary corrections of Section 3.

.

\section{Whittaker pairs ${\cal V}_{r}\subset {\cal V}$ }

Let ${\cal V}$ be the Virasoro algebra  i.e.
${\cal V}={\rm span}_{\mathbb{C}} \left\{z,
L_n : n\in \mathbb{Z}
\right\}$
with the Lie bracket
\begin{eqnarray}
\label{virasoro}
\nonumber
\left[L_m,L_n\right] & = & (m-n)L_{m+n} +\frac{z}{12}m\left(m^2-1\right)\delta_{m+n},
\\
\left[L_m,z\right] & = & 0.
\end{eqnarray}
Let $r$ be a a positive integer. We introduce the subalgebra
$\mathcal{V}_{r}$
generated by $L_{r}, \hdots, L_{2r}$
 $$
 \mathcal{V}_{r}
  = \text{span}_{\mathbb{C}}\{L_r, L_{r+1}, \hdots \}\subset {\cal V}.
 $$
A
Lie algebra homomorphism
$\psi_r : \mathcal{V}_r \rightarrow \mathbb{C} $
is uniquely defined by its values on the algebra generators. This justifies the notation
$$
 \psi_r = \{\psi_r(L_r),\dots,\psi_r(L_{2r})\}.
$$
We consider only non trivial $\psi_r$ i.e. at least one of $\psi_r(L_i)$ does not vanish. For
non trivial homomorphisms   we define the rank:
$$
{\rm rank} \,\psi_r ={\rm max} \{r\leqslant i \leqslant 2r : \psi_r(L_i)\neq 0\}.
$$
From commutation relations (\ref{virasoro}) one has $\psi_r(L_k) = 0$ for all $k>{\rm rank} \,\psi_r$.
\begin{DD}
Let $V$ be a left $\mathcal{V}$-module,  $\psi_r : \mathcal{V}_r \rightarrow \mathbb{C} $
- a non trivial  Lie algebra homomorphism and $c$ - a complex number. A vector $\ket{w} \in V$ is called a Whittaker vector of
the Whittaker pair ${\cal V}_{r}\subset {\cal V}$,
the central charge $c$ and the type $\psi_r$
if
$$
z\ket{w} =c\ket{w}\;\;{\it and }\;\;L_k \ket{w} = \psi_r(L_k) \ket{w}\;\; {\it for} \;\;k\geqslant r.
$$
A $\mathcal{V}$-module $V$ is called a Whittaker module of the Whittaker pair ${\cal V}_{r}\subset {\cal V}$, the central charge $c$
 and the type $\psi_r$ if it is generated by a
Whittaker vector
of the same pair, central charge
 and type.

\noindent The order and the rank of the Whittaker vector or module of a type $\psi_r$ are defined as  $r$
and ${\rm rank} \,\psi_r$, respectively.
\end{DD}
\noindent
Let us note that the standard notions of the Whittaker vectors and modules for the Virasoro algebra
as defined
 in \cite{Ondrus09,Ondrus11}
correspond
to the case $r=1$ with no central charge condition.

We assume that all Whittaker vectors and modules in this section
are of the Whittaker  pairs ${\cal V}_{r}\subset {\cal V}$ and have the same fixed value of the central charge.
For a Whittaker module $V$ of a Whittaker pair ${\cal V}_{r}\subset {\cal V}$ and a type $\psi_r$ we shall
use compact notation $V_{\psi_r}$.
\begin{DD}
 A Whittaker module $W_{\psi_r}$ generated by $\ket{w}$ is called a universal Whittaker  module of
 a type $\psi_r$, if for any other Whittaker
 module $V_{\psi_r}$ generated by $\ket{v}$ there exists a surjective module homomorphism
 $\Phi: W_{\psi_r} \rightarrow V_{\psi_r}$ such that $\Phi\ket{w} = \ket{v}$.
The Whittaker vector $\ket{w}$ of the type $\psi_r$ generating the universal module $W_{\psi_r}$ is called the universal Whittaker vector of type $\psi_r$.
\end{DD}

\begin{TT}
For each
Lie algebra homomorphism
$\psi_r : \mathcal{V}_r \rightarrow \mathbb{C} $
there exists  a unique, up to an isomorphism, universal Whittaker module
$W_{\psi_r} $.
\end{TT}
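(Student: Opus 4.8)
The plan is to realize $W_{\psi_r}$ as a module induced from a one-dimensional representation of a suitable subalgebra, after which existence, the universal property, and uniqueness all follow formally. First I would pass to the Lie subalgebra $\tilde{\mathcal{V}}_r = \mathcal{V}_r \oplus \mathbb{C}z \subset \mathcal{V}$. This is genuinely a subalgebra: $z$ is central, and the central cocycle term $\tfrac{z}{12}m(m^2-1)\delta_{m+n}$ in (\ref{virasoro}) vanishes identically on $\mathcal{V}_r$, since for $m,n\geqslant r>0$ one has $m+n\geqslant 2r>0$ and hence $\delta_{m+n}=0$. Thus $[\mathcal{V}_r,\mathcal{V}_r]\subseteq\mathcal{V}_r$ with no central contribution.

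On $\tilde{\mathcal{V}}_r$ I would define a one-dimensional representation $\chi:\tilde{\mathcal{V}}_r\to\mathbb{C}$ by $\chi(L_k)=\psi_r(L_k)$ for $k\geqslant r$ and $\chi(z)=c$. To see that $\chi$ is a Lie algebra homomorphism it suffices, the target being abelian, to check that $\chi([x,y])=0$ for all $x,y\in\tilde{\mathcal{V}}_r$. For $x,y\in\mathcal{V}_r$ the bracket lies in $\mathcal{V}_r$ with no $z$-term, so $\chi([x,y])=\psi_r([x,y])=0$ because $\psi_r$ is a homomorphism; any bracket involving $z$ vanishes outright. I denote the resulting one-dimensional $\tilde{\mathcal{V}}_r$-module by $\mathbb{C}_\chi$.

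Next I would set $W_{\psi_r}=U(\mathcal{V})\otimes_{U(\tilde{\mathcal{V}}_r)}\mathbb{C}_\chi$ and take the generating vector $\ket{w}=1\otimes 1$. Existence is then immediate: by construction $z\ket{w}=c\ket{w}$ and $L_k\ket{w}=\psi_r(L_k)\ket{w}$ for $k\geqslant r$, and $U(\mathcal{V})\ket{w}=W_{\psi_r}$, so $W_{\psi_r}$ is a Whittaker module of the pair $\mathcal{V}_r\subset\mathcal{V}$, central charge $c$ and type $\psi_r$. The universal property follows from the tensor--hom adjunction (Frobenius reciprocity): given any Whittaker module $V_{\psi_r}$ generated by a Whittaker vector $\ket{v}$ of the same central charge and type, the assignment $1\mapsto\ket{v}$ is a $\tilde{\mathcal{V}}_r$-module map $\mathbb{C}_\chi\to V_{\psi_r}$, precisely because $\ket{v}$ satisfies the same eigenvalue conditions; it extends uniquely to a $\mathcal{V}$-module map $\Phi:W_{\psi_r}\to V_{\psi_r}$ with $\Phi\ket{w}=\ket{v}$, and $\Phi$ is surjective because $\ket{v}$ generates $V_{\psi_r}$.

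Uniqueness up to isomorphism is the usual formal argument. If $W$ and $W'$ both satisfy the universal property, with generators $\ket{w}$ and $\ket{w'}$, the property yields surjections $\Phi:W\to W'$ and $\Phi':W'\to W$ with $\Phi\ket{w}=\ket{w'}$ and $\Phi'\ket{w'}=\ket{w}$; the composite $\Phi'\circ\Phi$ is an endomorphism of $W$ fixing the generator $\ket{w}$, hence equals $\mathrm{id}_W$ since $\ket{w}$ generates $W$, and symmetrically $\Phi\circ\Phi'=\mathrm{id}_{W'}$, so $\Phi$ is an isomorphism. I do not expect a genuine obstacle here: the entire argument is the standard existence of induced modules, and the only point demanding real care is the well-definedness of $\chi$, i.e. that imposing the fixed central charge $c$ places no constraint on $\psi_r$. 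This is guaranteed exactly by the vanishing of the central term on $\mathcal{V}_r$ noted above.
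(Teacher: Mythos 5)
Your proof is correct and is essentially the paper's own argument in different clothing: the induced module $U(\mathcal{V})\otimes_{U(\tilde{\mathcal{V}}_r)}\mathbb{C}_\chi$ is canonically the quotient $U(\mathcal{V})/I_r$ by the left ideal $I_r=\sum_{i\geqslant r}U(\mathcal{V})(L_i-\psi_r(L_i))+U(\mathcal{V})(z-c)$ that the paper uses, and Frobenius reciprocity reproduces the paper's map $[\,u\,]\mapsto u\ket{v}$. Your explicit check that the central cocycle vanishes on $\mathcal{V}_r$, so that $\chi$ is well defined, is a point the paper leaves implicit but is a welcome addition.
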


\begin{proof}
For each
Lie algebra homomorphism
$\psi_r : \mathcal{V}_r \rightarrow \mathbb{C} $
\[
 I_r \equiv \sum_{i \geq r} U(\mathcal{V})(L_i - \psi_r (L_i))+U(\mathcal{V})( z-c)
\]
is a left ideal in
$U(\mathcal{V})$.
$U(\mathcal{V})/I_r$ with the left action
\[
 u[\,v\,] \equiv [\,uv\,], \hspace{10pt} u,v \in U(\mathcal{V})
\]
is a $U(\mathcal{V})$-module generated by
$[\,1\,]$. For any
$ n \geqslant r$ one has
\[
(L_n - \psi_r(L_n))[\,1\,] = [L_n - \psi_r(L_n)] = 0\;\;{\rm and} \;\;(z-c)[\,1\,]=0.
\]
Hence
$[\,1\,]$ is a Whittaker vector and
$U(\mathcal{V})/I_r$ is a Whittaker module.
\newline
Let
$V_{\psi_r}$ be an arbitrary Whittaker module of a type
$\psi_r$ generated by
a Whittaker vector
$\ket{v}$. The map
$$
\Phi: U(\mathcal{V})/I_r
\,\ni\, [\,u\,] \;\rightarrow
\;u\ket{v}\,\in \,
V_{\psi_r }
$$
is a  surjective homomorphism
such that
$\Phi([\,1\,]) = \ket{v}$.
The uniqueness is a simple consequence of the universality.
\end{proof}

We define a pseudo partition $\lambda
$ of order $r$ as a non-decreasing finite sequence of integers smaller than $r$:
$$
\lambda = (\lambda_1,\dots, \lambda_n ),
\hspace{10pt}
\lambda_1 \leqslant \hdots \leqslant \lambda_n < r.
$$
With each  $\lambda$ we associate an element of the universal enveloping algebra $U(\mathcal{V})$:
$$
L_{\lambda} \equiv L_{\lambda_1} \hdots L_{\lambda_n}.
$$
It is convenient to
 supplement the set ${\cal P}^r$ of all pseudo partitions of order $r$ by the empty sequence $\emptyset$
for which
$$
L_{\emptyset}\equiv 1.
$$
Sometimes it is useful  to denote a pseudo partition $\lambda$ of order $r$ as:
$$
\lambda = (\lambda(-l),  \hdots, \lambda(r-1))
$$
where $\lambda(k)$ is the number of times the integer $k$ appears in  $\lambda$.
In this notation  $L_{\lambda}$ takes  the form
$$
L_{\lambda} = L_{-l}^{\lambda(-l)} \hdots L_{r-1}^{\lambda(r-1)}.
$$
As a consequence of the PBW theorem one has

\begin{TT}
 Let $W_{\psi_r}$ be the universal Whittaker module generated by $\ket{w}$.
Then the vectors
$L_{\lambda}\ket{w}$ where $\lambda$ runs over the set ${\cal P}^r$ of all pseudo  partitions
of order $r$, form a basis of $W_{\psi_r}$.
\end{TT}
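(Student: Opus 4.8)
The plan is to realize $W_{\psi_r}$ as the quotient $U(\mathcal{V})/I_r$ with $\ket{w}=[\,1\,]$, exactly as in the construction in the proof of Theorem~2.1, and then to read off the basis from a Poincar\'e--Birkhoff--Witt (PBW) basis of $U(\mathcal{V})$ adapted to the splitting of the generators into a ``creation'' part $\{L_k:k<r\}$ and an ``annihilation'' part $\{L_i:i\geqslant r\}\cup\{z\}$. First I would fix a total order on the basis of $\mathcal{V}$ in which all $L_k$ with $k<r$ (in increasing order of index) precede $z$, which in turn precedes all $L_i$ with $i\geqslant r$ (in increasing order). By the PBW theorem the ordered monomials form a basis of $U(\mathcal{V})$, and every such monomial has the shape $L_\lambda\, z^{a}\, L_\mu$, where $\lambda\in\mathcal{P}^r$ is a pseudo partition of order $r$, $a\in\mathbb{Z}_{\geqslant 0}$, and $\mu$ is a partition all of whose parts are $\geqslant r$.

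Second, I would prove spanning by reduction to normal form. Since $z\ket{w}=c\ket{w}$ and $L_i\ket{w}=\psi_r(L_i)\ket{w}$ for $i\geqslant r$, and these factors sit to the right in the chosen ordering, acting with a PBW monomial on $\ket{w}$ gives
$$
L_\lambda\, z^{a}\, L_\mu\,\ket{w}\;=\;c^{a}\,\psi_r(L_\mu)\,L_\lambda\ket{w},
\qquad \psi_r(L_\mu)\equiv\prod_j\psi_r(L_{\mu_j}).
$$
As the monomials $L_\lambda z^a L_\mu$ span $U(\mathcal{V})$ and $W_{\psi_r}=U(\mathcal{V})\ket{w}$, the vectors $L_\lambda\ket{w}$ with $\lambda\in\mathcal{P}^r$ span $W_{\psi_r}$.

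Third, and this is the real content, I would establish linear independence by exhibiting the vector space decomposition $U(\mathcal{V})=S\oplus I_r$, where $S=\bigoplus_{\lambda\in\mathcal{P}^r}\mathbb{C}\,L_\lambda$. Define a linear map $\pi:U(\mathcal{V})\to S$ on the PBW basis by $\pi(L_\lambda z^{a}L_\mu)=c^{a}\psi_r(L_\mu)L_\lambda$. Then $\pi$ restricts to the identity on $S$, so $S\cap\ker\pi=0$, and the spanning step already shows $U(\mathcal{V})=S+I_r$. Hence it suffices to prove $I_r\subseteq\ker\pi$; combining the two forces $\ker\pi=I_r$ and makes the sum direct, so that $\sum_\lambda c_\lambda L_\lambda\in I_r$ implies $\sum_\lambda c_\lambda L_\lambda=\pi(\sum_\lambda c_\lambda L_\lambda)=0$ and all $c_\lambda=0$.

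The main point requiring care is the inclusion $I_r\subseteq\ker\pi$, which I would reduce, by linearity, to checking that $\pi$ annihilates $v(L_i-\psi_r(L_i))$ (for $i\geqslant r$) and $v(z-c)$ for a PBW basis monomial $v=L_{\lambda'}z^{a'}L_{\mu'}$. Right-multiplying such a $v$ by $L_i$ only reorders the annihilation block $L_{\mu'}L_i$ inside the subalgebra $\mathcal{V}_r$; the essential observation is that any commutator $[L_p,L_q]$ arising there has $p,q\geqslant r$, hence $p+q\geqslant 2r>0$, so it produces neither a central term nor a creation operator and the block stays within $U(\mathcal{V}_r)$. Since the Lie homomorphism $\psi_r$ extends to an algebra homomorphism $U(\mathcal{V}_r)\to\mathbb{C}$, one obtains $\pi(vL_i)=c^{a'}\psi_r(L_{\mu'}L_i)L_{\lambda'}=\psi_r(L_i)\,\pi(v)$, while the $z-c$ case is immediate because $z$ is central; both generators are therefore killed. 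I expect this no-central-term observation, together with the multiplicativity of $\psi_r$, to be precisely the ingredient that makes $\pi$ well defined and confluent. An alternative that avoids the projection altogether is to put a $\mathcal{V}$-module structure directly on $S$, defining the action of each $L_k$ and of $z$ on the symbols $L_\lambda$ and verifying the Virasoro relations; universality then yields a surjection $W_{\psi_r}\to S$ carrying $L_\lambda\ket{w}$ onto the basis of $S$, which gives independence. Either route localises the difficulty in the same PBW and commutation bookkeeping, and in the present grading it is genuinely mild.
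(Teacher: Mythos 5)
Your argument is correct and is exactly the route the paper intends: the paper offers no proof beyond the remark that the theorem is ``a consequence of the PBW theorem,'' and your ordered-monomial decomposition $U(\mathcal{V})=S\oplus I_r$ with the projection $\pi$ (using that $[L_p,L_q]$ for $p,q\geqslant r$ produces no central term and stays in the annihilation block) is the standard way to supply the missing details. Nothing to correct.
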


\noindent Each pseudo partition $\lambda$ can be uniquely decomposed
$
\lambda = \lambda_- \cup \lambda_+
$
into the negative
$
\lambda_- = (\lambda(-l),  \hdots, \lambda(-1))
$ and the nonnegative $
\lambda_+ = (\lambda(0),  \hdots, \lambda(r-1))
$
part.
We shall introduce the length  $\# \lambda $
$$
\# \lambda = \sum_{i=0}^{r-1} \lambda(i)
$$
and the level $|\lambda|$
$$
|\lambda| = - \sum_{i<0} i \lambda(i)
$$
of a pseudo partition $\lambda$.
For an arbitrary vector
$\ket{v} = \sum_{\lambda}p_{\lambda}L_{\lambda}\ket{w}$ we define the maximal length
$$
\text{max}^{\#}\ket{v}=\text{max}\{  \#\lambda \hspace{5pt} : \hspace{5pt} p_{\lambda} \neq 0\}
$$
and the maximal level
$$
\text{max}\,\ket{v}
= \text{max}\{  |\lambda|\hspace{5pt} : \hspace{5pt} p_{\lambda} \neq 0\}$$
\begin{LL}
\label{lemat_plus}
Let $\ket{w}$ be the universal Whittaker vector of a type $\psi_r$ and a rank $s$. Then
for any pseudo partition $\lambda\in {\cal P}^r$
$$
\begin{array}{rlllllllll}
   [L_m,L_{\lambda_{+}}]\ket{w} &=& 0 & &\text{for} & s<m     ,
   \\
   {\rm max}^{\#}[L_m,L_{\lambda_{+}}]\ket{w}
   &< & \#\lambda
   & &\text{for} &
    r\leqslant m \leqslant s         .
   \end{array}
$$
\end{LL}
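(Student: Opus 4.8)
The plan is to reduce everything to a single bookkeeping principle governing how a product consisting of several \emph{low} operators (meaning $L_j$ with $0\leqslant j<r$) and exactly one \emph{high} operator (meaning $L_i$ with $i\geqslant r$) acts on $\ket{w}$. Writing $L_{\lambda_{+}}=L_{j_1}\cdots L_{j_p}$ with $0\leqslant j_1\leqslant\cdots\leqslant j_p<r$ and $p=\#\lambda$, the Leibniz rule for the adjoint action gives
\[
[L_m,L_{\lambda_{+}}]=\sum_{k=1}^{p}(m-j_k)\,L_{j_1}\cdots L_{j_{k-1}}\,L_{m+j_k}\,L_{j_{k+1}}\cdots L_{j_p},
\]
and no central terms appear, since $m+j_k\geqslant m\geqslant r\geqslant 1$ forces every relevant index sum to be strictly positive. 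Thus each summand is a product of $p-1$ low operators together with one high operator $L_{m+j_k}$ whose index is $\geqslant r$.

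Next I would establish the reduction principle: if $P=A\,L_i\,B$ with $A,B$ products of (sorted) low operators, $L_i$ high, and $N$ the total number of low factors, then, expanded in the basis $\{L_\mu\ket{w}\}$ of Theorem~2.2, $P\ket{w}$ is a combination of vectors $L_\mu\ket{w}$ with $\#\mu\leqslant N$; moreover $P\ket{w}=0$ whenever $i>s$. I would prove this by induction on the number of low factors standing to the right of $L_i$. In the base case $B$ is empty and $P\ket{w}=\psi_r(L_i)\,A\ket{w}$, where $A$ is already a sorted product of $N$ low operators, so it equals a single basis vector $L_\mu\ket{w}$ with $\#\mu=N$, and the scalar $\psi_r(L_i)$ vanishes precisely when $i>s$. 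For the inductive step I commute $L_i$ one place to the right through the first factor of $B$ via $L_iL_j=L_jL_i+(i-j)L_{i+j}$ (again no central term). The pass-through term keeps the high index $i$ and the total low count $N$; the merge term replaces two factors by the single high operator $L_{i+j}$, lowering $N$ by one while keeping the index $\geqslant i$. Both have strictly fewer low factors to the right of the high operator, so the induction hypothesis applies, and since $i+j>s$ whenever $i>s$, the vanishing assertion propagates.

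With this principle in hand the lemma is immediate. For $r\leqslant m\leqslant s$ every summand above carries one high operator and $N=p-1$ low factors, so its reduction contributes only $L_\mu\ket{w}$ with $\#\mu\leqslant p-1<p=\#\lambda$, whence $\mathrm{max}^{\#}[L_m,L_{\lambda_{+}}]\ket{w}\leqslant p-1<\#\lambda$. For $m>s$ each high operator has index $m+j_k>s$, so every summand reduces to zero and $[L_m,L_{\lambda_{+}}]\ket{w}=0$. The one point I expect to require care in the writing is the bookkeeping inside the reduction principle: one must verify that moving the high operator rightward never manufactures new low factors and never lowers its index below $r$, so that it is eventually absorbed into $\ket{w}$ as the scalar $\psi_r(L_{\mathrm{index}})$, and that the untouched low factors stay in sorted PBW order, so that no reordering among them — which could itself spawn extra factors — is ever needed.
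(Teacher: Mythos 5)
Your proof is correct and follows essentially the same route as the paper's: the Leibniz expansion producing exactly one high-index generator per summand, followed by commuting that generator to the right until it is absorbed into $\ket{w}$ as the scalar $\psi_r(L_{\mathrm{index}})$, with the observation that each pass-through or merge never increases the number of low factors and never lowers the high index below $s$ in the $m>s$ case. The only difference is presentational: the paper states the commutation bookkeeping informally, whereas you formalize it as an induction on the number of low factors to the right of the high operator.
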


\begin{proof}
 For $L_{\lambda_{+}} = L_{\lambda_{k+1}} \hdots L_{\lambda_n}$ one gets
 $$
 [L_m,L_{\lambda_{+}}]
 =
 \sum^{n}_{i=k+1}(m - \lambda_i) L_{\lambda_{k+1}} \hdots L_{m + \lambda_i} \hdots L_{\lambda_n}.
 $$
 If $m>s$ then $m+\lambda_i > s$ for all $i$ in the sum. Commuting all generators $L_l$ with $l>s$
 to the right one gets only  terms annihilating
 $\ket{w}$.
Terms with a maximal number of $L$ generators are
of the form
$$
L_{\lambda_{k+1}} \hdots L_{\lambda_{i-1}}L_{\lambda_{i+1}} \hdots L_{\lambda_n}L_{m + \lambda_i}\ket{w}
=
\psi_r(L_{m + \lambda_i})
L_{\lambda_{k+1}} \hdots L_{\lambda_{i-1}}L_{\lambda_{i+1}} \hdots L_{\lambda_n}\ket{w}.
$$
They do not necessarily vanish for
$r\leqslant m \leqslant s$. Hence the maximal possible number of generators is   $\#\lambda-1$.
\end{proof}

\begin{LL}
\label{lemat_minus}
Let $\ket{w}$ be the universal Whittaker vector of a type $\psi_r$ and a rank $s$. Then
for any pseudo partition $\lambda\in {\cal P}^r$
$$
\begin{array}{rlllllllll}
   \big[L_m,L_{\lambda_{-}}\big]L_{\lambda_{+}}\ket{w} &=& 0
   & & \text{for} & s+|\lambda|<m     ,
   \\   [5pt]
   {\rm max}\big[L_m,L_{\lambda_{-}}\big]L_{\lambda_{+}}\ket{w}
   &\leqslant & |\lambda|+s-m
   & &\text{for} &
    s < m \leqslant s+|\lambda|      ,
   \\    [7pt]
   {\rm max}\big[L_m,L_{\lambda_{-}}\big]L_{\lambda_{+}}\ket{w}
   &<& |\lambda|
   & &\text{for} &
   r\leqslant m \leqslant s   .
   \end{array}
$$
\end{LL}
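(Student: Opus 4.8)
The plan is to rest everything on a single monotonicity principle for the level and then to run one induction. For an arbitrary word $L_{\kappa_1}\cdots L_{\kappa_q}$ put $N=\sum_{\kappa_i<0}(-\kappa_i)$, its total negative weight; I would first show that after rewriting $L_{\kappa_1}\cdots L_{\kappa_q}\ket{w}$ in the basis $\{L_\lambda\ket{w}\}$ furnished by the PBW theorem, every basis vector occurring has level at most $N$. The straightening algorithm uses only two moves: transposing adjacent generators, $L_aL_b=L_bL_a+(a-b)L_{a+b}+\frac{c}{12}a(a^2-1)\delta_{a+b,0}$, and replacing a rightmost $L_k$ with $k\ge r$ by $\psi_r(L_k)$. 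The second deletes a generator of nonnegative index and leaves $N$ fixed; in the first the reordered term $L_bL_a$ keeps the same indices, the bracket term replaces $\{a,b\}$ by $\{a+b\}$, and the central term deletes both, so $N$ cannot grow because $\max(-a-b,0)\le\max(-a,0)+\max(-b,0)$. A corollary I will use repeatedly is that left multiplication by $L_\mu$ raises the maximal level by at most $\max(-\mu,0)$.

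For the third range $r\le m\le s$ this estimate suffices on its own. Expanding $[L_m,L_{\lambda_-}]$ by the derivation rule as a sum over the negative entries $\mu_j$ of $\lambda$, each summand is $(m-\mu_j)$ times the word obtained by replacing $L_{\mu_j}$ with $L_{m+\mu_j}$, plus, when $\mu_j=-m$, a central scalar times the word with $L_{\mu_j}$ deleted. Reading off $N$ for each summand gives $|\lambda|+\mu_j$ when $m+\mu_j\ge0$ and $|\lambda|-m$ otherwise, while the central summand also gives $|\lambda|-m$. Since $\mu_j\le-1$ and $m\ge r\ge1$, every value is at most $|\lambda|-1$, so by the principle the maximal level is $<|\lambda|$.

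The first two ranges $m>s$ I would unify as the single assertion that the maximal level is $\le|\lambda|+s-m$, with the convention that a negative right-hand side forces the vector to vanish. The starting observation is that for $m>s$ one has $L_mL_{\lambda_+}\ket{w}=0$, since $\psi_r(L_m)=0$ and $[L_m,L_{\lambda_+}]\ket{w}=0$ by Lemma~\ref{lemat_plus}; hence $[L_m,L_{\lambda_-}]L_{\lambda_+}\ket{w}=L_mL_\lambda\ket{w}$. I then induct on the number $p$ of negative entries, peeling off the leftmost one by writing $L_{\lambda_-}=L_{\mu_1}L_{\lambda'_-}$. This produces three pieces: $(m-\mu_1)L_{m+\mu_1}L_{\lambda'_-}L_{\lambda_+}\ket{w}$, a possible central piece, and $L_{\mu_1}[L_m,L_{\lambda'_-}]L_{\lambda_+}\ket{w}$; the latter two are bounded at once by the induction hypothesis for $\lambda'$ together with the corollary of the straightening estimate.

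The only delicate point, and the step I expect to be the main obstacle, is the first piece when $m':=m+\mu_1>s$, where the crude negative-weight bound is too weak. Here I would rewrite $L_{m'}L_{\lambda'}\ket{w}=[L_{m'},L_{\lambda'_-}]L_{\lambda'_+}\ket{w}$, again legitimate because $m'>s$, and reapply the induction hypothesis to the pair $(m',\lambda')$. The bookkeeping closes precisely because $m'\le s+|\lambda'|$ is equivalent to $m\le s+|\lambda|$ and because $|\lambda'|+s-m'=|\lambda|+s-m$, so the recursion either falls into the range $m'>s+|\lambda'|$, where it returns $0$, or reproduces exactly the target bound $|\lambda|+s-m$. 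Keeping these two subranges aligned through the recursion is where the care is needed; the remainder is the routine weight counting set up at the start.
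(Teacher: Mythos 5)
Your argument is correct, and the underlying mechanism is the same as the paper's: a positive-index generator created by commuting $L_m$ into $L_{\lambda_-}$ must absorb at least $m-s$ units of negative weight before it can act nontrivially on $L_{\lambda_+}\ket{w}$, which is exactly what produces the bound $|\lambda|+s-m$. But you organize this quite differently. The paper does it in one pass: it splits $[L_m,L_{\lambda_-}]L_{\lambda_+}\ket{w}$ according to the sign of $m+\lambda_i$, reorders, and parametrizes the surviving terms by the ``partial sums'' $l=-\sum_j\lambda_{i_j}$ with $m>l\geqslant m-s$, reading off the maximal level as $|\lambda|-l'$ for the smallest such sum. You instead isolate a global invariant --- the total negative weight of a word is non-increasing under PBW straightening --- and then run an induction on the number of negative entries, peeling off the leftmost one and recursing on the pair $(m+\mu_1,\lambda')$ when $m+\mu_1>s$. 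Your version buys two things: it makes rigorous the paper's unjustified step ``after reordering the first sum takes the form\dots'' (why reordering cannot raise the level is precisely your monotonicity principle), and it unifies the vanishing range $m>s+|\lambda|$ with the bound range $s<m\leqslant s+|\lambda|$ into a single statement whose bookkeeping closes under the recursion. What the paper's more explicit computation buys in exchange is an identification of \emph{which} terms realize the maximal level (those indexed by the minimal partial sum $l'$), and that explicit form is reused in the proof of Lemma \ref{lemat_kplus2r}, where the coefficient of the leading term matters; your inductive bound alone would not suffice there. As a minor point of care, your third-range estimate uses $m\geqslant r\geqslant 1$ to get $|\lambda|-m\leqslant|\lambda|-1$ and $\mu_j\leqslant -1$ to get $|\lambda|+\mu_j\leqslant|\lambda|-1$; both are fine, and the degenerate case $\lambda_-=\emptyset$ is vacuous in all three ranges.
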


 \begin{proof}
 Let $s+|\lambda|<m $.  Ordering all terms in  the commutator $[L_m, L_{\lambda_{-}}]$
 one gets a sum of monomials with the generators $L_l$
   on the right such that $l>s$.  Hence  $[L_n,L_{\lambda_{-}}]L_{\lambda_{+}}\ket{w} = 0$  by Lemma \ref{lemat_plus}.
 \newline
 Let $r\leqslant m \leqslant s+|\lambda|$. For $L_{\lambda_{-}} = L_{\lambda_1} \hdots L_{\lambda_k}$
 one has
 \begin{eqnarray*}
 [L_m,L_{\lambda_{-}}] L_{\lambda_{+}}\ket{w}
 &=&
 \sum_{m+\lambda_i<0}(m - \lambda_i) L_{\lambda_1} \hdots L_{m + \lambda_i} \hdots L_{\lambda_k}L_{\lambda_{+}}\ket{w}
 \\
 &+&
 \sum_{m+\lambda_i\geqslant 0}(m - \lambda_i)
 L_{\lambda_1} \hdots L_{m + \lambda_i} \hdots L_{\lambda_k}L_{\lambda_{+}}\ket{w}.
 \end{eqnarray*}
 After reordering the first sum  takes the form
 $$
 \underset{ \scriptstyle \# \gamma = \,0}{\sum_{ |\gamma|\,=\,|\lambda|-m }}
\hspace{-12pt}p_{\gamma} \,L_{\gamma}\,L_{\lambda_{+}}\,\ket{w}
  $$
  and if nonzero it is of level $|\lambda|-m $.
 The second sum can be rewritten as
 $$
 \underset{ \scriptstyle \# \gamma = \,0}{\sum_{ |\gamma|
 \,=\,
 |\lambda|-m }}
 \hspace{-8pt}p_{\gamma} \,L_{\gamma}\,L_{\lambda_{+}}\,\ket{w}
 +\sum\limits_{l}^{}
 \underset{ \scriptstyle \# \gamma = \,0}{\sum_{ |\gamma|\,=\,|\lambda|-l}}
 \hspace{-6pt}p_{\gamma} \,L_{\gamma}\,L_{m-l}\,L_{\lambda_{+}}\,\ket{w},
 $$
 where the sum over $l$ runs over the partial sums
 $$
 m> l=-\sum\limits_{j=1}^t \lambda_{i_j}\geqslant m-s
 $$
corresponding to all possible subseries $\{\lambda_{i_1},\dots,\lambda_{i_t}\}
\subset \{\lambda_{1},\dots,\lambda_{k}\}$. Let $l'$ be the smallest of  such sums. Then the
terms
 $L_{\gamma}\,L_{m-l'}\,L_{\lambda_{+}}\,\ket{w}$ are of the maximal possible level
 $$
 {\rm max}\, L_{\gamma}\,L_{m-l'}\,L_{\lambda_{+}} \, \ket{w}
 =
 |\lambda| - l' \leqslant |\lambda| +s -m.
 $$
For $r \leqslant m \leqslant s $ one simply has
$$
{\rm max}\, L_{\gamma}\,L_{m-l'}\,L_{\lambda_{+}} \, \ket{w}
 =
 |\lambda| - l' < |\lambda| .
 $$
 \end{proof}

\begin{LL}
\label{lemat_kplus2r}
Let $\ket{w}$ be the universal Whittaker vector of a type $\psi_r$ and a rank $s$ and $\lambda$ - a pseudo partition
of order $r$.
If $k>0$ is the smallest number for which
$\lambda(-k) \neq 0$, then
\begin{eqnarray*}
[L_{k+s},L_{\lambda_{-}}]L_{\lambda_{+}}\ket{w}
&=&\\
&&
\hspace{-70pt}
\lambda(-k) \psi_r(L_{s})(2k+s)
L_{-l}^{\lambda(-l)} \hdots L_{-k-1}^{\lambda(-k-1)} L_{-k}^{\lambda(-k)-1} L_{\lambda_{+}}\ket{w}
\; + \;\ket{v}\; + \textstyle\;\ket{v'},
\end{eqnarray*}
where ${\rm max}\,\ket{v}  < |\lambda| - k$ and ${\rm max}\,\ket{v'}  \leqslant |\lambda| - k,\; {\rm max}^\#\ket{v'}  < \#\lambda$.
\end{LL}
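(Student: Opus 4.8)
The plan is to expand the commutator $[L_{k+s},L_{\lambda_-}]$ by the Leibniz (derivation) rule over the factors of $L_{\lambda_-}=L_{-l}^{\lambda(-l)}\cdots L_{-k}^{\lambda(-k)}$ and then sort the resulting monomials according to their level and length. The only structural commutator needed is $[L_{k+s},L_{-j}]=(k+s+j)L_{k+s-j}+\tfrac{z}{12}(k+s)((k+s)^2-1)\delta_{k+s-j}$, so I would first record the value $k+s-(-k)=2k+s$ obtained when $L_{k+s}$ is commuted with one of the $\lambda(-k)$ copies of the rightmost factor $L_{-k}$; this produces $L_s$, which upon being moved to the far right acts on $\ket{w}$ as $\psi_r(L_s)$ and yields the stated leading term. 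Note first that since $\lambda(-k)\geq 1$ one has $|\lambda|\geq k$, hence $s<k+s\leq s+|\lambda|$, so Lemma~\ref{lemat_minus} already guarantees that the whole expression has level $\leq|\lambda|-k$; the real work is only to isolate, among the level-$(|\lambda|-k)$ terms, those of length exactly $\#\lambda$.

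The classification proceeds case by case on the factor $L_{-j}$ that $L_{k+s}$ hits. For $j>k$ the output $L_{k+s-j}$ has index strictly below $s$: whether $k+s-j$ is negative, a low nonnegative index $<r$, or a high index in $[r,s)$, reordering it to act on $\ket{w}$ (producing a negative generator, a $\lambda_+$-type generator, or a scalar $\psi_r(L_{k+s-j})$ respectively) lowers the level to at most $|\lambda|-j<|\lambda|-k$; all such contributions, together with the central term (nonzero only when $j=k+s$, giving level $|\lambda|-(k+s)$), are absorbed into $\ket{v}$. For $j=k$ I would treat the block $[L_{k+s},L_{-k}^{\lambda(-k)}]=(2k+s)\sum_{p}L_{-k}^{p}L_sL_{-k}^{\lambda(-k)-1-p}$ and move each $L_s$ rightward through the remaining copies of $L_{-k}$; the diagonal contribution sums to $\lambda(-k)(2k+s)L_{-k}^{\lambda(-k)-1}L_s$, while the commutators $[L_s,L_{-k}]=(s+k)L_{s-k}+(\text{central})$ replace two factors by one of strictly smaller level and therefore fall into $\ket{v}$ as well.

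It then remains to push the surviving $L_s$ past $L_{\lambda_+}$. Writing $L_sL_{\lambda_+}=L_{\lambda_+}L_s+[L_s,L_{\lambda_+}]$, the first piece gives $\psi_r(L_s)L_{\lambda_+}\ket{w}$ and completes the leading term $\lambda(-k)\psi_r(L_s)(2k+s)L_{-l}^{\lambda(-l)}\cdots L_{-k}^{\lambda(-k)-1}L_{\lambda_+}\ket{w}$ at level $|\lambda|-k$ and length $\#\lambda$. For the second piece I would invoke Lemma~\ref{lemat_plus} with $m=s$: since $[L_s,L_{\lambda_+}]\ket{w}$ involves only nonnegative generators (hence is of level $0$) and satisfies $\mathrm{max}^{\#}[L_s,L_{\lambda_+}]\ket{w}<\#\lambda$, multiplying by the negative prefactor of level $|\lambda|-k$ produces terms of level exactly $|\lambda|-k$ but length $<\#\lambda$, which is precisely $\ket{v'}$.

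I expect the main obstacle to be bookkeeping rather than conceptual: one must verify cleanly that every reordering correction either strictly lowers the level (to join $\ket{v}$) or preserves the level while strictly lowering the length (to join $\ket{v'}$), and in particular that no correction from moving $L_s$ past the negative generators sneaks back up to level $|\lambda|-k$ with full length. Keeping the rightmost block $L_{-k}^{\lambda(-k)}$ adjacent to $L_{\lambda_+}$, which holds because $-k$ is the least negative index present, is what makes this separation transparent, since then only the genuinely diagonal term contributes at the top length.
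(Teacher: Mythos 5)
Your proposal is correct and follows essentially the same route as the paper's proof: isolating the $L_{-k}$ block, extracting the coefficient $(2k+s)$ from $[L_{k+s},L_{-k}]$, sending the lower-level reordering corrections into $\ket{v}$ via the argument of Lemma~\ref{lemat_minus}, and identifying $\ket{v'}$ with the length-reducing term $[L_s,L_{\lambda_+}]\ket{w}$ controlled by Lemma~\ref{lemat_plus}. The only cosmetic difference is that you expand the Leibniz rule factor by factor while the paper first splits $L_{\lambda_-}=L_{\lambda_-'}L_{-k}^{\lambda(-k)}$ into two blocks; the bookkeeping and the conclusions are identical.
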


\begin{proof}
Let $L_{\lambda_{-}^{'}} = L_{-n}^{\lambda(-n)} \hdots L_{-k-1}^{\lambda(-k-1)}$, so that
$L_{\lambda_{-}} = L_{\lambda_{-}^{'}}L_{-k}^{\lambda(-k)}$. Then
\begin{equation}
\label{komutator}
[L_{k+s},L_{\lambda_{-}}]L_{\lambda_{+}}\ket{w}
= [L_{k+s},L_{\lambda_{-}^{'}}]L_{-k}^{\lambda(-k)}L_{\lambda_{+}}\ket{w}
+ L_{\lambda_{-}^{'}}[L_{k+s},L_{-k}^{\lambda(-k)}]L_{\lambda_{+}}\ket{w}.
\end{equation}
Repeating the reasoning from the proof of Lemma
\ref{lemat_minus}
one can write
\begin{eqnarray*}
[L_{k+s},L_{\lambda_{-}^{'}}]L_{-k}^{\lambda(-k)}L_{\lambda_{+}}\ket{w}
&=&
\hspace{-20pt}
\underset{ \scriptstyle \# \gamma = \,0}{\sum_{ |\gamma|\,=\,|\lambda'|-k-s }}
\hspace{-10pt}p_{\gamma} \,L_{\gamma}\, L_{-k}^{\lambda(-k)}L_{\lambda_{+}}\ket{w}
\\
&& \hspace{-50pt}
+
\sum\limits_{l=\,k+1}^{k+s}
\underset{ \scriptstyle \# \gamma = \,0}{\sum_{ |\gamma|\,=\,|\lambda'|-l}}
\hspace{-10pt}p_{\gamma} \,L_{\gamma}\,L_{k+s-l} L_{-k}^{\lambda(-k)}L_{\lambda_{+}}\ket{w}
\end{eqnarray*}
where the range in the sum over $l$ follows from the assumption that
$L_{\lambda_{-}^{'}}$ does not contain generators $L_{-i}$ with $i\leqslant k$.
An element of the maximal degree in the second sum
takes the form
$$
q_{\gamma}
 L_{\gamma} L_{-k}^{\lambda(-k)}L_{s-1}L_{\lambda_{+}}\ket{w},
$$
where $|\gamma| =|\lambda'|-k-1$. Hence
$$
{\rm max}\,[L_{k+s},L_{\lambda_{-}^{'}}]L_{-k}^{\lambda(-k)}L_{\lambda_{+}}\ket{w}
< |\lambda|-k.
$$
Let us now turn to the second term in
\eqref{komutator}. A simple algebra yields
\begin{eqnarray*}
L_{\lambda_{-}^{'}}[L_{k+s},L_{-k}^{\lambda(-k)}]L_{\lambda_{+}}\ket{w}
&=& (2k+s)L_{\lambda_{-}^{'}}
  \sum_{j=1}^{\lambda(-k)}L_{-k}^{j-1}
  L_{s}L_{-k}^{\lambda(-k)-j}L_{\lambda_{+}}\ket{w}
\\
&=& \lambda(-k)\psi_r(L_s) (2k+s)
L_{\lambda_{-}^{'}} L_{-k}^{\lambda(-k)-1}L_{\lambda_{+}}\ket{w}
\\
 &+&
(2k+s)L_{\lambda_{-}^{'}}
  \sum_{j=1}^{\lambda(-k)}L_{-k}^{j-1}
  [L_{s},L_{-k}^{\lambda(-k)-j}]L_{\lambda_{+}}\ket{w}
  \\
&+&
\lambda(-k)(2k+s)
L_{\lambda_{-}^{'}} L_{-k}^{\lambda(-k)-1}[L_s,L_{\lambda_{+}}]\ket{w}.
\end{eqnarray*}
By Lemma
\ref{lemat_minus}
maximal level of the second term on the r.h.s. is strictly smaller than $|\lambda| -k$.
If $\lambda_+(0)\neq 0$  the last term does not vanish. Since
\[
[L_s, L_{\lambda_+}]|w\rangle= [L_s, L_0^k]L_{\lambda'_+}|w\rangle = \psi_{r}(L_s)\sum_{l=1}^{k}\binom{k}{l}s^l L_{0}^{k-l}L_{\lambda'_+}|w\rangle
\]
its length is strictly smaller than $\# \lambda_+$.
\end{proof}
\begin{LL}
\label{lemat_no Whittaker vectors}
Let $W_{\psi_r}$ be the universal Whittaker module of type $\psi_r$ and let
$\ket{u}\in W_{\psi_r}$ be an arbitrary vector.
If ${\rm max}\,\ket{u} >0$ then $\ket{u}$ is not a Whittaker vector of any pair ${\cal V}_{r'}\subset {\cal V}$.
\end{LL}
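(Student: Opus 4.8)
The plan is to argue by contradiction. Suppose $\ket{u}$ with $N := \mathrm{max}\,\ket{u}>0$ were a Whittaker vector of some pair ${\cal V}_{r'}\subset{\cal V}$, of type $\psi_{r'}$ and rank $s'$, so that $L_m\ket{u}=\psi_{r'}(L_m)\ket{u}$ for all $m\geqslant r'$, with $\psi_{r'}(L_{s'})\neq 0$ and $\psi_{r'}(L_m)=0$ for $m>s'$. The central charge causes no trouble: $z$ acts as the fixed $c$ on all of $W_{\psi_r}$, so any central charge of the putative pair is forced to equal $c$. Writing $\ket{u}=\sum_{\lambda}p_{\lambda}L_{\lambda}\ket{w}$ in the basis $\{L_{\lambda}\ket{w}:\lambda\in{\cal P}^r\}$ (note $N>0$ forces $\ket{u}\neq 0$), I would track how the action of $L_m$ changes the maximal level, and play two observations against the Whittaker relation.

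First I would record the behaviour of $L_m$ for $m>s$, where $s={\rm rank}\,\psi_r$. Commuting $L_m$ to the right past $L_{\lambda_+}$ and using $\psi_r(L_m)=0$ together with Lemma~\ref{lemat_plus}, one gets $L_mL_{\lambda}\ket{w}=[L_m,L_{\lambda_-}]L_{\lambda_+}\ket{w}$, and then Lemma~\ref{lemat_minus} shows this has level strictly below $|\lambda|$ (or vanishes). Hence for every $m>s$ the vector $L_m\ket{u}$ has maximal level $<N$ or is zero. This already settles the subcase $s'>s$: the rank relation $L_{s'}\ket{u}=\psi_{r'}(L_{s'})\ket{u}$ has a right-hand side that is a nonzero multiple of $\ket{u}$, hence of level $N$, whereas the left-hand side has level $<N$ (possibly $0$) — a contradiction.

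It remains to treat $s'\leqslant s$, and here lies the main obstacle: I must exhibit a single $m>s$ with $L_m\ket{u}\neq 0$, for then $m>s\geqslant s'\geqslant r'$ makes the Whittaker relation applicable and forces $L_m\ket{u}=\psi_{r'}(L_m)\ket{u}=0$, a contradiction. To produce such an $m$, let $S=\{\lambda:p_{\lambda}\neq 0,\ |\lambda|=N\}$, let $k(\lambda)$ be the least positive integer with $\lambda(-k)\neq 0$, and put $k_{\min}=\min_{\lambda\in S}k(\lambda)$; the candidate is $m=k_{\min}+s$. Applying Lemma~\ref{lemat_kplus2r} to each $\lambda\in S$ with $k(\lambda)=k_{\min}$ yields a leading contribution of level exactly $N-k_{\min}$ with nonzero coefficient $p_{\lambda}\,\lambda(-k_{\min})\,\psi_r(L_s)\,(2k_{\min}+s)$. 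The remaining contributions are harmless: the $\ket{v},\ket{v'}$ remainders of Lemma~\ref{lemat_kplus2r} lie below level $N-k_{\min}$ or below length $\#\lambda$; the terms $\lambda\in S$ with $k(\lambda)>k_{\min}$ drop the level by more than $k_{\min}$ (their least available negative index being $-k(\lambda)$), hence land strictly below $N-k_{\min}$; and the $\lambda\notin S$ land below $|\lambda|-k_{\min}<N-k_{\min}$ by Lemma~\ref{lemat_minus}.

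The delicate point is the absence of cancellation among the leading terms. I would observe that $\lambda\mapsto\lambda'$, deleting one factor $L_{-k_{\min}}$, is injective on $\{\lambda\in S:k(\lambda)=k_{\min}\}$, so these leading contributions sit on distinct basis vectors $L_{\lambda'}\ket{w}$; restricting attention to the top length $\#^{*}=\max\{\#\lambda:\lambda\in S,\ k(\lambda)=k_{\min},\ p_{\lambda}\neq 0\}$ separates them from the strictly shorter $\ket{v'}$-remainders. Since each coefficient is nonzero and the basis vectors are distinct, the level-$(N-k_{\min})$, length-$\#^{*}$ component of $L_{k_{\min}+s}\ket{u}$ is nonzero, so $L_{k_{\min}+s}\ket{u}\neq 0$, completing the contradiction. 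Therefore a vector of positive maximal level cannot be a Whittaker vector of any pair ${\cal V}_{r'}\subset{\cal V}$.
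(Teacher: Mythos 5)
Your proof is correct and follows essentially the same route as the paper's: both hinge on picking the minimal $k$ such that some top-level pseudo partition has $\lambda(-k)\neq 0$, applying Lemma~\ref{lemat_kplus2r} to show $L_{k+s}\ket{u}$ is nonzero of strictly smaller maximal level, and using Lemma~\ref{lemat_minus} to dispose of the remaining generators. Your case split by the rank $s'$ (rather than the paper's split by order relative to $k+s$) and your isolation of the leading terms at the maximal length $\#^{*}$ are minor reorganizations that, if anything, make the paper's linear-independence claim more explicit.
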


\begin{proof}
 For an arbitrary nonzero vector $\ket{u} = \sum p_{\lambda} L_{\lambda}\ket{w} $
 we introduce
 $$
 M=\max \, \ket{u} ,
 \hspace{10pt}
 \Lambda_M = \{\lambda \hspace{5pt} : \hspace{5pt} p_{\lambda} \neq 0 \wedge |\lambda| = M\}.
 $$
Since $M>0$,  there exists   a smallest positive number $k$ for which there exists a partition
 $\lambda \in \Lambda_M$ such that $\lambda(-k) \neq 0$.
 For $s={\rm rank}\, \psi_r$ one has
 \begin{eqnarray*}
L_{k+s} \ket{u}
 &=& \sum_{\lambda \notin \Lambda_M}p_{\lambda}[L_{k+s},L_{\lambda_{-}}]L_{\lambda_{+}}\ket{w}
 + \sum_{\lambda \in \Lambda_M}p_{\lambda}[L_{k+s},L_{\lambda_{-}}]L_{\lambda_{+}}\ket{w}.
 \end{eqnarray*}
If $\lambda \notin \Lambda_M$ then  $|\lambda|<M$ and Lemma
\ref{lemat_minus} implies
 $$
 {\rm max}
 \sum_{\lambda \notin \Lambda_M}p_{\lambda}[L_{k+s},L_{\lambda_{-}}]L_{\lambda_{+}}\ket{w}
 = |\lambda_{-}| -k < M - k.
 $$
Taking this into account and applying Lemma
\ref{lemat_kplus2r}
to all terms of the second sum one can write
\begin{equation}
\label{expr}
L_{k+s}\ket{u}
= \underset{\lambda(-k) \neq 0}{\sum_{\lambda \in \Lambda_M}}
p_{\lambda}\lambda(-k)\psi_r(L_{s})
  (2k+s)L_{\lambda'_{-}}L_{\lambda_{+}}\ket{w}
  \;+\;\ket{v}\;+\;\textstyle\ket{v'},
\end{equation}
where ${\rm max}\,\ket{v} < M-k$, ${\rm max}^\#\,\ket{v'} < \#\lambda $ and $\lambda'_{-}=(\lambda(-l),\dots, \lambda(-k-1), \lambda(-k)-1)$.
All vectors
 $L_{\lambda'_{-}}L_{\lambda_{+}}\ket{w}$ in the sum above are linearly independent.
Since for all of them $|L_{\lambda'_{-}}L_{\lambda_{+}}\ket{w}|= M-k$,
 they form with
$\ket{v}$ and $\ket{v'}$ a linearly independent system as well. The decomposition (\ref{expr}) thus
imply that $L_{k+s}\ket{u}$ is a nonzero vector  not proportional to  $\ket{u}$.
Hence  $\ket{u}$ is not a Whittaker vector of any type of order $k+s$ or lower.

On the other hand, for $n > k+s$ one has
$$
L_{n} \ket{u}
= \sum_{\lambda}p_{\lambda}L_n L_{\lambda_{-}}L_{\lambda_{+}}\ket{w}
= \sum_{\lambda}p_{\lambda}[L_n, L_{\lambda_{-}}]L_{\lambda_{+}}\ket{w}.
$$
Lemma
\ref{lemat_minus} now implies
$$
{\rm max}
\left(
      \sum_{\lambda}p_{\lambda}[L_n, L_{\lambda_{-}}]L_{\lambda_{+}}\ket{w}
\right)
 < {\rm max}\ket{u}
$$
so $L_n \ket{u}$ cannot be of the form $\alpha \ket{u}$ for any $\alpha \neq 0$.
Hence $\ket{u}$ is not a Whittaker vector of any order higher than $k+s$.
\end{proof}
For an arbitrary element $L_{\lambda_r}\ket{w}$ of the basis in the universal
Whittaker module $W_{\psi_r}$ with $\#\lambda\neq 0$  we denote by  $l_{\lambda}$  the smallest nonnegative integer
$i$ such that $\lambda(i)\neq 0$. It is convenient to assume that $l_{\lambda}=r$ if $\#\lambda=0$.
One has for instance $l_{\emptyset}=r$ for $\ket{w}$.
\begin{TT}
\label{tw_wektor}
Let $\ket{w}$ be a universal Whittaker vector of a type $\psi_r$.
All  Whittaker vectors  in $W_{\psi_r}$  of a given type $\psi_{r'}$ form a linear subspace
$Wh_{\psi_{r'}}\subset W_{\psi_r}$.
\begin{enumerate}
\item
If ${\rm rank}\,\psi_r =s\in \{2r,2r-1\} $ there are Whittaker vectors in $W_{\psi_r}$ of
the type $\psi_r$ and of the higher
order types
$$
\psi_{r'} = \{ \psi_r(L_{r'}),\dots,\psi_r(L_{s}), 0,\dots,0\}\;\;\;,\;\;\;r' = s- r+2, \dots, s,
$$
and
\vspace{-25pt}
\begin{eqnarray*}
Wh_{\psi_r}&=&{\rm span} \{ \ket{w}\},
\\
Wh_{\psi_{r'}}&=&{\rm span} \{L_{\lambda} \ket{w}\,:\,
|\lambda_r|=0, \,l_{\lambda} \geqslant s-r'+1\}.
\end{eqnarray*}
\item
If ${\rm rank}\, \psi_r\, =\,s<2r-1$
there are Whittaker vectors in $W_{\psi_r}$ of  the type $\psi_r$ and of the higher
order types
$$
\psi_{r'} = \{ \psi_r(L_{r'}),\dots,\psi_r(L_{s}), 0,\dots,0\}\;\;\;,\;\;\;r' =  r+1, \dots, s,
$$
and\vspace{-25pt}
\begin{eqnarray*}
Wh_{\psi_{r}}&=&{\rm span} \{L_{\lambda} \ket{w}\,:\,
|\lambda_r|=0,\, l_{\lambda} \geqslant s-r+1\},
\\
Wh_{\psi_{r'}}&=&{\rm span} \{L_{\lambda} \ket{w}\,:\,
|\lambda_r|=0,\, l_{\lambda} \geqslant s-r'+1\}.
\end{eqnarray*}
\end{enumerate}
There are no other Whittaker vectors of any type in the universal Whittaker module $W_{\psi_r}$.
\end{TT}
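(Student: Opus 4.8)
The plan is to base everything on Lemma~\ref{lemat_no Whittaker vectors}, which already forces any Whittaker vector $\ket{u}$ of any pair in $W_{\psi_r}$ to satisfy ${\rm max}\,\ket{u}=0$. Hence every candidate can be written as $\ket{u}=\sum_{\lambda}p_{\lambda}L_{\lambda_{+}}\ket{w}$ with $\lambda=\lambda_{+}$ ranging only over nonnegative pseudo partitions. That $Wh_{\psi_{r'}}$ is a linear subspace is immediate from the linearity of the defining eigenvalue conditions, so the real content is to pin these vectors down. Writing $s={\rm rank}\,\psi_r$, the engine of the whole argument is the single identity $L_kL_{\lambda_{+}}\ket{w}=\psi_r(L_k)L_{\lambda_{+}}\ket{w}+[L_k,L_{\lambda_{+}}]\ket{w}$, valid for $k\geqslant r$, together with Lemma~\ref{lemat_plus}: the commutator term vanishes when $k>s$ and has length strictly below $\#\lambda$ when $r\leqslant k\leqslant s$.

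First I would determine which types are possible. For $k>s$ every created index $k+\lambda_i$ exceeds $s$, so $L_k\ket{u}=0$; thus any admissible type has rank at most $s$, and no order $r'<r$ can occur, because applying $L_{r-1}$ strictly raises ${\rm max}^{\#}$ and so cannot act as a scalar. For $r\leqslant k\leqslant s$ the length-lowering bound shows that the top-length part of $L_k\ket{u}$ is $\psi_r(L_k)$ times the top-length part of $\ket{u}$; comparing with $\psi_{r'}(L_k)\ket{u}$ forces $\psi_{r'}(L_k)=\psi_r(L_k)$ there and $\psi_{r'}(L_k)=0$ for $k>s$. This isolates exactly the families of types written in the statement. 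The corresponding inclusion is then painless: if $|\lambda|=0$ and $l_{\lambda}\geqslant s-r'+1$, then for all $k\geqslant r'$ one has $k+\lambda_i\geqslant r'+(s-r'+1)>s$, so $[L_k,L_{\lambda_{+}}]\ket{w}=0$ and $L_{\lambda_{+}}\ket{w}$ is a genuine Whittaker vector of type $\psi_{r'}$; together with $\ket{w}$ these span a subspace of $Wh_{\psi_{r'}}$.

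The hard part is completeness. Given a Whittaker $\ket{u}$ of type $\psi_{r'}$, set $j_0={\rm min}\{l_{\lambda}:p_{\lambda}\neq0,\ \lambda\neq\emptyset\}$ and assume for contradiction $j_0\leqslant s-r'$. Put $k_0=s-j_0$, so that $r'\leqslant k_0\leqslant s$; since moreover $j_0\leqslant r-1<r'$, combining $j_0<r'\leqslant k_0=s-j_0$ gives the crucial inequality $s-2j_0>0$. The Whittaker condition reduces to $\sum_{\lambda}p_{\lambda}[L_{k_0},L_{\lambda_{+}}]\ket{w}=0$, and every $\lambda$ with $l_{\lambda}>j_0$ contributes $0$ (trivially for $\lambda=\emptyset$, and because all created indices are at least $s+1$ otherwise). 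For the surviving terms ($l_{\lambda}=j_0$) a direct computation in the spirit of Lemma~\ref{lemat_kplus2r} gives
\[
[L_{k_0},L_{\lambda_{+}}]\ket{w}=(s-2j_0)\,\lambda(j_0)\,\psi_r(L_s)\,L_{j_0}^{\lambda(j_0)-1}L_{j_0+1}^{\lambda(j_0+1)}\cdots L_{r-1}^{\lambda(r-1)}\ket{w}+\ket{v},
\]
with ${\rm max}^{\#}\ket{v}<\#\lambda-1$. Taking the largest length $N$ occurring among these $\lambda$ and reading off the length-$(N-1)$ component, the leading vectors are pairwise distinct basis elements (removing one copy of $j_0$ is injective on partitions with $l_{\lambda}=j_0$), hence linearly independent, and their coefficients $(s-2j_0)\lambda(j_0)\psi_r(L_s)$ are nonzero. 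Therefore all such $p_{\lambda}$ vanish, contradicting the choice of $j_0$. This shows that every nonempty $\lambda$ occurring in $\ket{u}$ obeys $l_{\lambda}\geqslant s-r'+1$.

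Combining the two inclusions yields $Wh_{\psi_{r'}}={\rm span}\{L_{\lambda}\ket{w}:|\lambda|=0,\ l_{\lambda}\geqslant s-r'+1\}$ in every case where the empty partition itself meets the bound, i.e.\ $r\geqslant s-r'+1$; this covers all of part~(2) and the higher types of part~(1), and also $r'=r$ when $s=2r-1$. The one genuine exception is $s=2r$, $r'=r$, where $l_{\lambda}\geqslant r+1$ admits no pseudo partition at all, yet $\ket{w}$ is trivially Whittaker, so $Wh_{\psi_r}={\rm span}\{\ket{w}\}$ must be recorded by hand. I expect the completeness step to be the only real obstacle: the whole argument turns on extracting from $[L_{k_0},L_{\lambda_{+}}]\ket{w}$ a non-cancelling leading term, which works precisely because $s-2j_0>0$ and because stripping one copy of $L_{j_0}$ keeps the resulting basis vectors distinct.
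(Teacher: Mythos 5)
Your proposal is correct and follows essentially the same route as the paper: reduce to level-zero vectors via Lemma~\ref{lemat_no Whittaker vectors}, then extract a non-cancelling length-lowering leading term $(s-2l)\lambda(l)\psi_r(L_s)L_{\lambda\setminus\{l\}}\ket{w}$ from $[L_{s-l},L_{\lambda_+}]\ket{w}$, which is exactly the paper's key computation. The only differences are bookkeeping ones — you minimize $l_\lambda$ over all nonempty components and then take the maximal length within that stratum, whereas the paper minimizes $l_\lambda$ over the maximal-length stratum, and you make explicit the $L_{r-1}$ argument excluding orders below $r$, which the paper leaves implicit.
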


\begin{proof}
By Lemma \ref{lemat_no Whittaker vectors} it is enough to consider vectors of the form
$$
\ket{u}=\sum p_{\lambda} L_{\lambda}\ket{w}
$$
where $|\lambda|=0$ for all $\lambda$ in the sum.
Let $N =
{\rm max}^\#\ket{u}$ and
\begin{eqnarray*}
\Lambda^{N} &=&\{\lambda \,:\, p_{\lambda} \neq 0 \wedge \#\lambda = N\},
\\
l
&=&
{\rm min}\, \{ l_{\lambda} \,:\, \lambda\in\Lambda^{N}\},
\\
\Lambda^N_{l}&=&\{\lambda\in \Lambda^N \,:\, l_{\lambda}=l\}\ .
\end{eqnarray*}
For $k\geqslant r$ one has
 $$
 (L_{k}-\psi_r({L_k})) \ket{u}
 = \sum_{\lambda \notin \Lambda^{N}}
 p_{\lambda}[L_{k},L_{\lambda_{+}}]\ket{w}
 + \sum_{\lambda \in \Lambda^{N}}
 p_{\lambda}[L_{k},L_{\lambda_{+}}]\ket{w}.
 $$
For the first sum Lemma
\ref{lemat_plus} implies
$$
{\rm max}^\# \sum_{\lambda \notin \Lambda^{N}}
 p_{\lambda}[L_{k},L_{\lambda_{+}}]\ket{w}
< N-1.
$$
If $s-l\geqslant r$ the second sum for $k=s-l$ takes the form
\begin{eqnarray*}
\sum_{\lambda \in \Lambda^{N}}
 p_{\lambda}[L_{s-l},L_{\lambda_{+}}]\ket{w}
&=&
\sum_{\lambda \in \Lambda^{N}_{l}}
 p_{\lambda}
\psi_r(L_{s})\lambda(l) (s-2l)L_{l}^{\lambda(l)-1}  \hdots  L_{r-1}^{\lambda(r-1)} \ket{w}
\end{eqnarray*}
and one one gets
\begin{eqnarray*}
(L_{s-l}- \psi_{r}(L_{s-l}))\ket{u} &=& \ket{v}
+
\sum_{\lambda \in \Lambda^{N}_l}
 p_{\lambda}
\psi_r(L_{s}) \lambda(l) (s-2l) L_{l}^{\lambda(l)-1}  \hdots  L_{r-1}^{\lambda(r-1)} \ket{w}
\end{eqnarray*}
where $\text{max}^{\#}\,\ket{v} < N-1$
so that all the terms on the r.h.s are linearly independent.
It follows
that $L_{s-l} \ket{u} $ is a nonzero vector not proportional to $\ket{u}$. Hence $\ket{w'}$ is not a Whittaker vector
of any type of order $s-l$ or lower.

\noindent We shall now discuss separate cases.
\begin{enumerate}
\item
If $s \in \{2r-1,2r\}$ the condition $s-l\geqslant r$ is always satisfied and there are no Whittaker vectors in $W_{\psi_r}$ of
any type of the order $r$ or lower except vectors proportional to $
\ket{w}$.
For $r'$ in the range $s-r+2 \leqslant r' \leqslant s$ all vectors of the form
$$
\underset{l_{\lambda} \geqslant s-r'+1}{\sum\limits_{|\lambda|=0}} p_{\lambda} L_{\lambda} \ket{w}
$$
are Whittaker vectors of  type
$
\psi_{r'} = \{ \psi_r(L_{r'}),\dots,\psi_r(L_{s}), 0,\dots,0\}.
$
\item
If $s < 2r-1$  there are no Whittaker vectors of any type of order $s$ or lower
with nonzero components along the vectors
$L_{\lambda}\ket{w}$ with $s- l_{\lambda}\geqslant r$ ($|\lambda|=0$). The only possibility left over are vectors of the form
$$
\underset{l_{\lambda}>s-r}{\sum\limits_{|\lambda|=0}} p_{\lambda} L_{\lambda} \ket{w}.
$$
One easily checks that all of them are Whittaker vectors of the same type as $\ket{w}$.
Thus there are no lower order Whittaker vectors of any type and all Whittaker vectors of order $r$ are of the $\psi_r$ type.

The higher order vectors can be constructed as in the other two cases.
For $r'$ in the range $r+1\leqslant r' \leqslant s$ all vectors of the form
$$
\underset{l_{\lambda} \geqslant s-r'+1}{\sum\limits_{|\lambda|=0}} p_{\lambda} L_{\lambda} \ket{w}
$$
are Whittaker vectors of the type
$
\psi_{r'} = \{ \psi_r(L_{r'
}),\dots,\psi_r(L_{s}),0,\dots, 0\}.
$
\end{enumerate}
Since all vectors $\ket{u}$ of the maximal level zero were considered there are no other Whittaker vectors.
\end{proof}

\begin{DD}
Let  $V_{\psi_r}$ be a Whittaker module generated by a Whittaker vector $\ket{w}$ of a type $\psi_r$.
The dot-action $\cdot$ of the subalgebra ${\cal V}_r$ on $V_{\psi_r}$ is defined
by
$$
L_n \cdot \ket{v} \equiv (L_n - \psi_r(L_n))\ket{v},\;\;\;n\geqslant r,\;\;\;\ket{v} \in V_{\psi_r}.
$$
\end{DD}
\noindent
For an arbitrary vector $\ket{v} = \sum p_{\lambda_r}L_{\lambda_r}\ket{w}$ one has
$$
L_n \cdot \ket{v} = \sum p_{\lambda_r}[L_n,L_{\lambda_r}]\ket{w}.
$$
\begin{LL}
\label{lemat_nilpotent}
All generators of $\,{\cal V}_r$ are locally nilpotent on  $V_{\psi_r}$ with respect to the dot-action,
i.e. for each $n \geqslant r$ and  $\ket{v} \in V_{\psi_r}$ there exists an integer $k_{n,\ket{v}}$
such that
$$
L^{k_{n,\ket{v}}}_n \ket{v}=0.
$$
\end{LL}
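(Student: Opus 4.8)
The plan is to produce a well-founded measure on $V_{\psi_r}$ that strictly decreases under every application of the dot-action $L_n\cdot$ ($n\geqslant r$), so that iterating $L_n\cdot$ on a fixed $\ket{v}$ must reach $0$ after finitely many steps. Writing $\ket{v}=\sum_{\lambda}p_{\lambda}L_{\lambda}\ket{w}$ one has $L_n\cdot\ket{v}=\sum_{\lambda}p_{\lambda}[L_n,L_{\lambda}]\ket{w}$, and after the decomposition $L_{\lambda}=L_{\lambda_{-}}L_{\lambda_{+}}$ the Leibniz rule gives
\[
[L_n,L_{\lambda}]\ket{w}=[L_n,L_{\lambda_{-}}]L_{\lambda_{+}}\ket{w}+L_{\lambda_{-}}[L_n,L_{\lambda_{+}}]\ket{w}.
\]
I would analyze the two summands separately using Lemmas \ref{lemat_plus} and \ref{lemat_minus}.

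Set $s={\rm rank}\,\psi_r$. For the first summand Lemma \ref{lemat_minus} gives ${\rm max}\,[L_n,L_{\lambda_{-}}]L_{\lambda_{+}}\ket{w}<|\lambda|$ whenever $r\leqslant n\leqslant s$, and an even larger level drop (or vanishing) when $n>s$; in particular this summand never contributes to the level $|\lambda|$. For the second summand, $[L_n,L_{\lambda_{+}}]\ket{w}$ is a combination of basis vectors of level $0$ which vanishes for $n>s$ and, for $r\leqslant n\leqslant s$, has length strictly below $\#\lambda$ by Lemma \ref{lemat_plus}. Multiplying on the left by $L_{\lambda_{-}}$ leaves the level equal to $|\lambda|$ but keeps the length drop, so $L_{\lambda_{-}}[L_n,L_{\lambda_{+}}]\ket{w}$ has level exactly $|\lambda|$ and length $<\#\lambda$. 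The upshot is that for $n>s$ the dot-action strictly lowers the level of every basis term, whereas for $r\leqslant n\leqslant s$ the only terms that can preserve the top level are those coming from the second summand applied to top-level partitions, and these have strictly smaller length.

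This suggests the measure $\Phi(\ket{v})=\big({\rm max}\,\ket{v},\,N\big)$, where $N$ is the largest length ${\rm max}^{\#}$ realized among the partitions of maximal level, ordered lexicographically on $\mathbb{N}\times\mathbb{N}$. The estimates above yield, for every $n\geqslant r$ and every nonzero $L_n\cdot\ket{v}$, that $\Phi(L_n\cdot\ket{v})<\Phi(\ket{v})$: either the maximal level drops, or it is unchanged while the maximal length at that level drops (the first summand contributing nothing to the top level). Since the lexicographic order on $\mathbb{N}\times\mathbb{N}$ is a well-ordering it admits no infinite strictly descending chain, so the sequence $\ket{v},\,L_n\cdot\ket{v},\,L_n^{2}\cdot\ket{v},\dots$ terminates. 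The minimal nonzero configuration $\Phi=(0,0)$ forces $\ket{v}\in{\rm span}\{\ket{w}\}$, which is annihilated because $L_n\cdot\ket{w}=(L_n-\psi_r(L_n))\ket{w}=0$. Hence some dot-power $L_n^{k_{n,\ket{v}}}\ket{v}=0$.

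I expect the main obstacle to be precisely that for $r\leqslant n\leqslant s$ the level by itself is not strictly decreasing, since the term $L_{\lambda_{-}}[L_n,L_{\lambda_{+}}]\ket{w}$ reproduces the top level. The resolution is to refine the invariant by the length at the top level and to use Lemma \ref{lemat_plus} to show it strictly decreases there, while Lemma \ref{lemat_minus} guarantees that the first summand creates no new terms of maximal level. Once these two bounds are in place the remainder is routine bookkeeping.
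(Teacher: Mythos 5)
Your argument is correct and follows the paper's proof in its essentials: both rest on the decomposition $[L_n,L_{\lambda}]=[L_n,L_{\lambda_-}]L_{\lambda_+}+L_{\lambda_-}[L_n,L_{\lambda_+}]$ together with Lemmas \ref{lemat_plus} and \ref{lemat_minus}, which show that the dot-action strictly lowers the level through the negative part and the length through the nonnegative part. The paper concludes by expanding the $k$-fold iterated commutator via the Leibniz rule and choosing $k$ large enough that every summand contains a vanishing factor, whereas you package the same monotonicity into the lexicographic well-founded measure $\bigl({\rm max}\,\ket{v},\ {\rm max}^{\#}\text{ among top-level terms}\bigr)$; the two bookkeeping devices are interchangeable here.
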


 \begin{proof}
 It is enough to consider vectors of the form $L_{\lambda}\ket{w}$.
  For any $n \geq r$ and $\lambda \in \mathcal{P}^r$
  one has, up to numerical coefficients,
 \begin{eqnarray}
 \label{lemat_nilpotent_komutator}
   \underset{k}{\underbrace{[L_n,[L_n,\hdots[L_n}},L_{\lambda_{-}}L_{\lambda_{+}}]\hdots]\ket{w}
 &\sim&
 \\
 &&
 \hspace{-150pt} \sim \sum_{l=0}^{k}
   \underset{l}{\underbrace{[L_n,[L_n,\hdots[L_n}},L_{\lambda_{-}}]\hdots]
   \underset{k-l}{\underbrace{[L_n,[L_n,\hdots[L_n}},L_{\lambda_{+}}]\hdots]\ket{w}.
 \nonumber
\end{eqnarray}
 By Lemmas
\ref{lemat_plus}
and
\ref{lemat_minus}
the level
of the expression $[L_n, L_{\lambda_{-}}]L_{\lambda_{+}}\ket{w}$ is smaller than
$|\lambda|$ and the length of the expression
$L_{\lambda_{-}}[L_n, L_{\lambda_{+}}]\ket{w}$ is smaller than $\#\lambda$.
Let $k_+$ and $k_-$ be the biggest numbers for which
\begin{eqnarray*}
{\rm max}\, \underset{k_-}{\underbrace{[L_n,[L_n,\hdots[L_n}},L_{\lambda_{-}}]\hdots]\ket{w}&>&0,
 \\
{\rm max}^\#\,\underset{k_+}{\underbrace{[L_n,[L_n,\hdots[L_n}},L_{\lambda_{+}}]\hdots]\ket{w}&>&0.
\end{eqnarray*}
To ensure vanishing of
\eqref{lemat_nilpotent_komutator} one can  choose $
k> 2\max\{k_+ +1,k_- +1\}.
$
 \end{proof}

\begin{LL}
\label{lemat_finite_dim}
Let $\ket{v} \in V_{\psi_r}$ be an arbitrary  vector.
$U(\mathcal{V}_r)\cdot \ket{v}$ is a finite
dimensional $\mathcal{V}_r$ submodule of $V_{\psi_r}$ with respect to the dot action.
\end{LL}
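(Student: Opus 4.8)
The plan is to exhibit a single nonnegative integer valued weight on the PBW basis $\{L_\lambda\ket{w}\}$ that is bounded below, takes each value only finitely often, and is strictly lowered by the dot-action of every generator of ${\cal V}_r$. Finite dimensionality then follows at once, because the whole cyclic orbit of $\ket{v}$ stays inside one level set of this weight.

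First I would record that the dot-action is genuinely a ${\cal V}_r$-module structure, so that $U(\mathcal{V}_r)\cdot\ket{v}$ is the cyclic ${\cal V}_r$-submodule generated by $\ket{v}$. Indeed, for $m,n\geq r$ one has $m+n>0$, so the central term in (\ref{virasoro}) drops out, and $[L_m\cdot,L_n\cdot]=[L_m,L_n]=(m-n)L_{m+n}$; since $\psi_r$ is a Lie algebra homomorphism into the abelian $\mathbb{C}$, $(m-n)\psi_r(L_{m+n})=\psi_r([L_m,L_n])=0$, which is exactly what is needed for $L_n\mapsto L_n\cdot$ to respect brackets. By the PBW theorem $U(\mathcal{V}_r)\cdot\ket{v}$ is therefore spanned by the vectors $L_{n_1}\cdot\,\cdots\,\cdot L_{n_k}\cdot\ket{v}$ with all $n_i\geq r$, so it suffices to control one single dot-application at a time.

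The key device is the weight $\Phi(\lambda)=2|\lambda|+\#\lambda$ on $\mathcal{P}^r$. For any fixed $C$ there are only finitely many $\lambda$ with $\Phi(\lambda)\leq C$: the bound $|\lambda|\leq C/2$ leaves finitely many choices for the negative part (a partition of an integer $\leq C/2$) and $\#\lambda\leq C$ leaves finitely many choices for the nonnegative part (multiplicities of $L_0,\dots,L_{r-1}$ summing to at most $C$), so $\mathrm{span}\{L_\mu\ket{w}:\Phi(\mu)\leq C\}$ is finite dimensional. The heart of the argument is then to show that for every $m\geq r$ and every $\lambda$, each basis vector occurring in $L_m\cdot L_\lambda\ket{w}=[L_m,L_{\lambda_-}]L_{\lambda_+}\ket{w}+L_{\lambda_-}[L_m,L_{\lambda_+}]\ket{w}$ has strictly smaller $\Phi$ than $\lambda$. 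For the second summand Lemma \ref{lemat_plus} keeps the level equal to $|\lambda|$ while dropping the length below $\#\lambda$, so $\Phi$ decreases. For the first summand I would reuse the reordering from the proof of Lemma \ref{lemat_minus}: the resulting monomials are either of the form $L_\gamma L_{\lambda_+}\ket{w}$ with level $|\lambda|-m$ and unchanged length, where $\Phi$ drops by $2m\geq 2r$, or of the form $L_\gamma L_{m-l}L_{\lambda_+}\ket{w}$ with level $|\lambda|-l$ $(l\geq 1)$ and length at most $\#\lambda+1$, where $\Phi\leq\Phi(\lambda)-2l+1\leq\Phi(\lambda)-1$.

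Consequently, writing $\Phi_0=\max\{\Phi(\lambda):p_\lambda\neq 0\}$ for $\ket{v}=\sum_\lambda p_\lambda L_\lambda\ket{w}$, every vector obtained by iterated dot-action lies in the finite dimensional space $\mathrm{span}\{L_\mu\ket{w}:\Phi(\mu)\leq\Phi_0\}$, and hence $U(\mathcal{V}_r)\cdot\ket{v}$ is finite dimensional. The only delicate point, and the reason neither the level nor the length alone will serve, is that commuting $L_m$ past a negative generator can simultaneously lower the level and raise the length; the weight must therefore charge the level strictly more heavily than the length, which is precisely the role of the factor $2$ in $\Phi$, the one unit of length gained always being paid for by at least one unit of level lost.
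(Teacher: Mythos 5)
Your proof is correct, but it follows a genuinely different route from the paper's. The paper deduces the lemma from Lemmas \ref{lemat_plus} and \ref{lemat_minus} (which supply an $N$ with $L_n\cdot\ket{v}=0$ for all $n>N$) combined with the local nilpotency of Lemma \ref{lemat_nilpotent}: only monomials in the finitely many generators $L_r,\dots,L_{N-1}$ can act nontrivially, and an induction over these generators, each locally nilpotent on the finitely many vectors produced at the previous stage, shows that only finitely many PBW monomials survive. You instead exhibit the single monovariant $2|\lambda|+\#\lambda$, with finite level sets, and show that every $L_m\cdot$ strictly decreases it. Note that this is strictly more than the \emph{statements} of Lemmas \ref{lemat_plus} and \ref{lemat_minus} provide: Lemma \ref{lemat_minus} controls only the level of $[L_m,L_{\lambda_-}]L_{\lambda_+}\ket{w}$, and the length of the resulting monomials can in fact increase by one. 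You correctly identify this as the delicate point and go back into the reordering computation of that lemma's proof to verify that each unit of length gained costs at least one unit of level ($l\geqslant 1$), which is exactly what the factor $2$ in your weight is calibrated to absorb. The payoff is a cleaner and more quantitative argument: the bound on the weight is uniform over the entire cyclic orbit, and local nilpotency (Lemma \ref{lemat_nilpotent}) falls out as a corollary rather than being an input. One cosmetic remark: the lemma concerns an arbitrary Whittaker module $V_{\psi_r}$, where the vectors $L_\lambda\ket{w}$ span but need not be a basis; either carry out the estimate in the universal module and push it forward along the surjection onto $V_{\psi_r}$ (which intertwines the dot actions), or observe that a spanning estimate already suffices. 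The paper's proof makes the same implicit reduction.
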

\begin{proof}
The PBW basis of $U(\mathcal{V}_r)$ consists of all monomials of the form
$
L_r^{\alpha(r)}\dots L_n^{\alpha(n)}
$.
By Lemmas
\ref{lemat_plus} and
\ref{lemat_minus}
there exists $N$ such that $L_n \cdot \ket{v}=0$
for all $n> N$.
Using Lemma \ref{lemat_nilpotent} one can then show that there are only finitely many pseudo partitions
$$
\lambda^N =( \lambda^N(r), \dots, \lambda^N(N-1))
$$
such that $L_{\lambda^N}\cdot\ket{v}\neq 0$.
\end{proof}
\begin{TT}
 Any submodule of a Whittaker module of a type $\psi_r$ contains a Whittaker vector of the same type.
\label{tw_podmodul}
\end{TT}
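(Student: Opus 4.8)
The plan is to produce, inside any nonzero submodule $M\subset V_{\psi_r}$, a vector annihilated by the dot-action of every generator of $\mathcal{V}_r$, since such a vector is exactly a Whittaker vector of type $\psi_r$ (the central charge condition $z\ket{u}=c\ket{u}$ holds automatically on all of $V_{\psi_r}$). First I would pick any nonzero $\ket{v}\in M$ and form $U=U(\mathcal{V}_r)\cdot\ket{v}$. By Lemma \ref{lemat_finite_dim} this is a finite-dimensional subspace; since $M$ is a $\mathcal{V}$-submodule it is contained in $M$ and stable under the dot-action. Because $\psi_r$ is a homomorphism and $[L_m,L_n]=(m-n)L_{m+n}$ carries no central term for $m,n\geqslant r$, the dot-action is a genuine representation $\rho$ of $\mathcal{V}_r$ on $U$.

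The key step is to show that $\rho(x)$ is nilpotent for \emph{every} $x\in\mathcal{V}_r$, not merely for the generators $L_r,\dots,L_{2r}$ as provided by Lemma \ref{lemat_nilpotent}. For this I would refine that lemma using the estimates of Lemmas \ref{lemat_plus} and \ref{lemat_minus}: the dot-action $L_m\cdot$ sends a basis vector $L_\lambda\ket{w}$ to $[L_m,L_\lambda]\ket{w}=[L_m,L_{\lambda_-}]L_{\lambda_+}\ket{w}+L_{\lambda_-}[L_m,L_{\lambda_+}]\ket{w}$, where the first term strictly lowers the level $|\lambda|$ and the second keeps the level fixed while strictly lowering the length $\#\lambda$. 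Hence every $L_m\cdot$, and therefore every finite combination $x\cdot=\sum_m a_m\,(L_m\cdot)$, strictly decreases the lexicographically ordered pair consisting of the maximal level and the maximal length. This pair lives in a well-ordered set, so $\rho(x)$ is locally nilpotent on $V_{\psi_r}$, and its restriction to the finite-dimensional space $U$ is nilpotent.

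Finally I would invoke Engel's theorem. The image $\rho(\mathcal{V}_r)\subset\mathfrak{gl}(U)$ is a finite-dimensional Lie algebra all of whose elements are nilpotent operators, so there is a nonzero vector $\ket{u}\in U\subseteq M$ with $\rho(x)\ket{u}=0$ for all $x$. Taking $x=L_n$ gives $L_n\ket{u}=\psi_r(L_n)\ket{u}$ for every $n\geqslant r$, so $\ket{u}$ is the desired Whittaker vector of type $\psi_r$. The main obstacle is precisely the upgrade carried out in the second paragraph: Engel's theorem requires the whole Lie algebra to act by nilpotent operators, and the local nilpotency of the generators alone (Lemma \ref{lemat_nilpotent}) does not formally suffice, so one must exploit the uniform level-and-length decrease supplied by Lemmas \ref{lemat_plus} and \ref{lemat_minus}.
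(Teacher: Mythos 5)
Your proof is correct and follows essentially the same route as the paper: form $U(\mathcal{V}_r)\cdot\ket{v}$, use Lemma \ref{lemat_finite_dim} to get a finite-dimensional module for the dot-action, and apply Engel's theorem to extract a common null vector, which is by definition a Whittaker vector of type $\psi_r$. The only place you diverge is your second paragraph: the paper invokes Engel after citing just Lemma \ref{lemat_nilpotent} (local nilpotency of the generators $L_n$), whereas you correctly note that Engel requires \emph{every} element of the image to act nilpotently and you supply the missing lexicographic $(\text{level},\text{length})$ argument from Lemmas \ref{lemat_plus} and \ref{lemat_minus} — a worthwhile tightening of a step the paper leaves implicit.
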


\begin{proof}
 Let $S$ be a submodule of a Whittaker module $V_{\psi_r}$ and $0 \neq \ket{v} \in S$.
By Lemma \ref{lemat_finite_dim} $F=U(\mathcal{V}_r)\cdot \ket{v}$ is a finite
${\cal V}_r$ submodule with respect to the dot action.
By Lemmas
\ref{lemat_plus} and
\ref{lemat_minus}
there exists $N$ such that $L_n \cdot F=0$
for all $n> N$.
The quotient ${\cal U}={\cal V}_r / {\cal V}_N$ is a finite dimensional Lie algebra
and $F$ is a ${\cal U}$ module with respect to the induced dot action
$$
[L_n]\cdot \ket{u} = L_n\cdot \ket{u},\;\;\;n=r,\dots,N-1,\;\;\ket{u}\in F.
$$
It follows from Lemma \ref{lemat_nilpotent}  that this action is locally nilpotent.
Thus  by the Engel theorem  there exists a vector $\ket{w'}\in F$ such that
$$
[L_n]\cdot \ket{w'} = 0,\; \;\;n=r,\dots,N-1.
$$
But this is a Whittaker vector of the type $\psi_r$ by construction of the induced dot action.
\end{proof}

\begin{TT}
\label{tw_structure of universal}
 Let $W_{\psi_r}$ be the universal Whittaker module generated by $\ket{w}$ and $s={\rm rank}\,\psi$.
 \begin{enumerate}
\item
If $s \in \{2r,2r-1\}$ then $W_{\psi_r}$ is simple.
\item
If $s<2r-1$ then there exists an infinite composition series
$$
\dots W^{(n)}_{\psi_r} \,\subset\,\dots\,\subset\,W^{(1)}_{\psi_r}\,\subset\, W^{(0)}_{\psi_r}=W_{\psi_r},
$$
such that all composition factors
$W^{(n-1)}_{\psi_r} /  W^{(n)}_{\psi_r}
$
are Whittaker modules of the type
$$
\begin{array}{lllllll}
\psi_{{s\over 2}} &=& \{ 0,\dots,0,\psi_{r}(L_{r}),\dots .\,.\,,\psi_r(L_{s})\}
&{\it for}& s\in 2\mathbb{N},
\\
\psi_{{s-1\over 2}} \!&=& \{ 0,\dots,0,\psi_{r}(L_{r}),\dots,\psi_r(L_{s}), 0\}
&{\it for}& s\in 2\mathbb{N}+1.
\end{array}
$$
\end{enumerate}
\end{TT}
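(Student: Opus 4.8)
I would deduce this at once from the two structure results already in hand. Let $S\subset W_{\psi_r}$ be a nonzero submodule. By Theorem~\ref{tw_podmodul} it contains a Whittaker vector of type $\psi_r$, and when $s={\rm rank}\,\psi_r\in\{2r-1,2r\}$ the first item of Theorem~\ref{tw_wektor} asserts $Wh_{\psi_r}={\rm span}\{\ket{w}\}$. Hence that vector is a nonzero multiple of $\ket{w}$, so $\ket{w}\in S$; since $\ket{w}$ generates the whole module, $S=W_{\psi_r}$, and $W_{\psi_r}$ is simple.

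\textbf{Part 2 (composition series).} The first thing to observe is that the prospective composition factor is itself of high rank. Regardless of the parity of $s$, the displayed factor type $\psi'$ (which is $\psi_{s/2}$ when $s$ is even) carries the nonzero values $\psi_r(L_r),\dots,\psi_r(L_s)$ as its top entries, so its rank equals twice its order, or twice its order minus one. Thus $\psi'$ is covered by Part~1 and its universal Whittaker module $W_{\psi'}$ is \emph{simple}. The plan is then to build a strictly descending chain
\[
\cdots\subset W^{(2)}_{\psi_r}\subset W^{(1)}_{\psi_r}\subset W^{(0)}_{\psi_r}=W_{\psi_r}
\]
whose every quotient $W^{(n-1)}_{\psi_r}/W^{(n)}_{\psi_r}$ is isomorphic to this simple $W_{\psi'}$; each quotient being simple, such a chain is automatically a composition series, and it is infinite exactly because it fails to terminate.

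To construct the first step I would put $r_0={\rm order}\,\psi'$ and note that, for the image of $\ket{w}$ in a quotient to become a type-$\psi'$ Whittaker vector, precisely the vectors $L_k\ket{w}$ with $r_0\le k\le r-1$ must be killed: for these indices $\psi'(L_k)=0$ while $L_k\ket{w}\neq0$ by the PBW basis, whereas for $k\ge r$ the eigenvalue equations already hold since $\psi'(L_k)=\psi_r(L_k)$. I would therefore take $W^{(1)}_{\psi_r}$ to be the submodule generated by $\{L_k\ket{w}:r_0\le k\le r-1\}$. By construction $W_{\psi_r}/W^{(1)}_{\psi_r}$ is a Whittaker module of type $\psi'$ generated by the image of $\ket{w}$, hence a quotient of the simple $W_{\psi'}$, so it is either $0$ or $\cong W_{\psi'}$. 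That it is nonzero---equivalently that $\ket{w}\notin W^{(1)}_{\psi_r}$---I would verify by the maximal-length and maximal-level bookkeeping of Lemmas~\ref{lemat_plus}--\ref{lemat_kplus2r}, exactly as in the proof of Lemma~\ref{lemat_no Whittaker vectors}: reordering any element generated from the $L_k\ket{w}$ never reproduces the length-zero, level-zero vector $\ket{w}$. Iterating, $W^{(1)}_{\psi_r}$ is a nonzero submodule, so by Theorem~\ref{tw_podmodul} it again contains type-$\psi_r$ Whittaker vectors, and repeating the same reduction relative to one of them produces $W^{(2)}_{\psi_r}\subsetneq W^{(1)}_{\psi_r}$ with quotient $\cong W_{\psi'}$, and so on.

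The genuine obstacle, I expect, is to make this iteration both precise and uniform. One must show that each kernel $W^{(n)}_{\psi_r}$ carries type-$\psi_r$ Whittaker data of the same shape, that the successive quotient is exactly one copy of $W_{\psi'}$ rather than $0$ or a larger subquotient, and that the chain never stabilizes, i.e.\ $\bigcap_n W^{(n)}_{\psi_r}=0$. The nontermination ultimately reflects the second item of Theorem~\ref{tw_wektor}: the space $Wh_{\psi_r}$ of type-$\psi_r$ Whittaker vectors is infinite-dimensional, so infinitely many copies of $W_{\psi'}$ can be peeled off. A subsidiary delicate point, sharpest for odd $s$, is the exact bookkeeping of the factor type---verifying that the surviving eigenvalues occupy the positions recorded in the statement, so that $\psi'$ is the high-rank type claimed and not a degenerate lower-rank one; here the local nilpotence of the dot-action (Lemma~\ref{lemat_nilpotent}) together with the Engel argument underlying Theorem~\ref{tw_podmodul} is what guarantees that the factors are honest Whittaker modules of a single type.
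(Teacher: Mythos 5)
Part 1 of your proposal coincides with the paper's argument and is fine. The gap is in Part 2, precisely at the step you yourself flag as ``the genuine obstacle'': the iteration is not actually carried out, and the mechanism you sketch for it would not work as stated. You propose to obtain $W^{(2)}_{\psi_r}$ by finding (via Theorem~\ref{tw_podmodul}) a type-$\psi_r$ Whittaker vector inside $W^{(1)}_{\psi_r}$ and ``repeating the same reduction relative to it.'' But $W^{(1)}_{\psi_r}$ is generated by the $r-r_0$ vectors $L_{r_0}\ket{w},\dots,L_{r-1}\ket{w}$, not by a single Whittaker vector, so the quotient $W^{(1)}_{\psi_r}/W^{(2)}_{\psi_r}$ produced by your reduction is generated by the images of \emph{all} of these generators; there is no reason for it to be cyclic over one Whittaker vector, hence no reason for it to be a Whittaker module of type $\psi'$ at all, let alone a single copy of the simple $W_{\psi'}$ (it could a priori be a sum of several copies, or something larger). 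Your proposal never supplies the uniform description of $W^{(n)}_{\psi_r}$ that would make each successive quotient cyclic.

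The paper closes exactly this gap with one explicit formula: it sets
$$
W^{(n)}_{\psi_r} \;=\; {\rm span}_{U({\cal V})}\bigl\{\,L_{\left[{s\over 2}\right]}\ket{w},\dots,L_{r-2}\ket{w},\,L_{r-1}^{\,n}\ket{w}\,\bigr\},
$$
i.e.\ the generators $L_{\left[{s\over 2}\right]}\ket{w},\dots,L_{r-2}\ket{w}$ are \emph{common to every term of the chain} and only the power of $L_{r-1}$ increases. Consequently $W^{(n-1)}_{\psi_r}/W^{(n)}_{\psi_r}$ is generated by the single class $\bigl[\,L_{r-1}^{\,n-1}\ket{w}\,\bigr]$, which one checks directly is a Whittaker vector of the displayed lower-order, high-rank type, so the quotient is a Whittaker module of that type and is simple by Part~1 (your observation that the factor type is high rank is correct and is exactly how the paper uses Part~1). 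Two further remarks. First, your insistence that $\bigcap_n W^{(n)}_{\psi_r}=0$ is neither needed for the statement nor true of the paper's chain: the fixed generators $L_{\left[{s\over 2}\right]}\ket{w},\dots,L_{r-2}\ket{w}$ lie in every $W^{(n)}_{\psi_r}$, so the intersection is nonzero whenever $s<2r-2$; the theorem only asserts an infinite strictly decreasing chain with simple factors. Second, your appeal to the infinite-dimensionality of $Wh_{\psi_r}$ as the ``reason'' the chain does not terminate is heuristic at best; in the paper the strictness of the inclusions comes from the explicit PBW bookkeeping for the vectors $L_{r-1}^{\,n}\ket{w}$, and the paper notes the construction works equally for powers of any combination $\alpha^{\left[{s\over 2}\right]}L_{\left[{s\over 2}\right]}+\dots+\alpha^{r-1}L_{r-1}$.
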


\begin{proof}
If $S \subset W_{\psi_r}$ is a submodule then
 Theorem \ref{tw_podmodul} implies that $S$ contains  a Whittaker vector $\ket{w'}$ and also the submodule
 $U({\cal V})\ket{w'}$ generated by $\ket{w'}$. By Theorem \ref{tw_wektor}, if $s\in \{2r,2r-1\}$ then any
 Whittaker vector of the type $\psi_r$ in $W_{\psi_r}$ is proportional to $\ket{w}$ hence $U({\cal V})\ket{w'}=W_{\psi_r}$
 and $S=W_{\psi_r}$ which proves the first part.

For the proof of the second part we
construct a strictly decreasing series of submodules:
$$
W^{(n)}_{\psi_r} ={\rm span}_{U({\cal V})} \{
L_{\scriptscriptstyle\left[{s\over 2}\right]} \ket{w},\dots,
L_{r-2}\ket{w}, L^n_{r-1}\ket{w}
\}
$$
For all $n\in\mathbb{N}$ the quotient $W^{(n-1)}_{\psi_r} /  W^{(n)}_{\psi_r}$
 is generated by the vector $\left[\,L^n_{r-1}\ket{w}\,\right]$
which by construction is a Whittaker vector of one of the types stated above. By the first part of the theorem
the quotient module is simple. The construction is not unique. It works for subsequent powers of an arbitrary
linear combinations of the form
$$
\alpha^{\scriptscriptstyle\left[{s\over 2}\right]}  L_{\scriptscriptstyle\left[{s\over 2}\right]}
+
\dots
+
\alpha^{r-1}  L_{r-1}.
$$
\end{proof}

\begin{CC}
\label{wniosek 1}
Any Whittaker module of order $r$ and rank $2r$ or $2r-1$ is isomorphic to the universal Whittaker module of the same type.
\end{CC}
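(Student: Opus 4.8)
The plan is to combine the universal property of $W_{\psi_r}$ with the simplicity statement already established. First I would invoke the universal property: given any Whittaker module $V_{\psi_r}$ of order $r$ and type $\psi_r$, generated by a Whittaker vector $\ket{v}$, the first theorem of this section supplies a surjective module homomorphism $\Phi: W_{\psi_r}\to V_{\psi_r}$ with $\Phi(\ket{w})=\ket{v}$, where $\ket{w}$ is the universal Whittaker vector. By the first isomorphism theorem this gives $V_{\psi_r}\cong W_{\psi_r}/\ker\Phi$, so it suffices to prove $\ker\Phi=0$.

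Next I would note that $\ker\Phi$ is a $\mathcal{V}$-submodule of $W_{\psi_r}$. Since the hypothesis is exactly that the rank is $s\in\{2r,2r-1\}$, part 1 of Theorem \ref{tw_structure of universal} tells us that $W_{\psi_r}$ is simple. Hence the submodule $\ker\Phi$ must be either $0$ or all of $W_{\psi_r}$, and the corollary reduces to excluding the second possibility.

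Then I would rule out $\ker\Phi=W_{\psi_r}$. A Whittaker module is by definition generated by a Whittaker vector, which is nonzero, so $\ket{v}\neq 0$; since $\Phi(\ket{w})=\ket{v}\neq 0$, the map $\Phi$ is not identically zero, and therefore $\ker\Phi\neq W_{\psi_r}$. Combined with simplicity this forces $\ker\Phi=0$, so $\Phi$ is a bijective module homomorphism, i.e.\ an isomorphism $W_{\psi_r}\cong V_{\psi_r}$ respecting the distinguished Whittaker vectors.

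I do not expect any genuine obstacle here: once Theorem \ref{tw_structure of universal} is available, the argument is the standard ``surjection onto a simple module has trivial kernel'' observation. The only point requiring a moment of care is verifying that $\Phi$ is nonzero, which is immediate from the convention that a Whittaker module is nontrivial (generated by a nonzero Whittaker vector). The entire force of the statement rests on part 1 of the structure theorem, so essentially all the work has already been done.
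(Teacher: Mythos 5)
Your argument is correct and is essentially identical to the paper's proof: both invoke the universal property to obtain a surjection $\Phi$, observe that $\ker\Phi$ is a submodule of the simple module $W_{\psi_r}$ (Theorem \ref{tw_structure of universal}, part 1), and conclude $\ker\Phi=0$ since $\Phi(\ket{w})=\ket{v}\neq 0$. No differences worth noting.
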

\begin{proof}
Let $V_{\psi_r}$ be a Whittaker module of a type $\psi_r$ and rank $2r$ or $2r-1$ generated by a vector $\ket{v}$.
By the universal property of $W_{\psi_r}$ there exists a surjective homomorphism
$\Phi: W_{\psi_r}\to V_{\psi_r}$ such that $\Phi(\ket{w})=\ket{v}$. The kernel of $\Phi$ is a submodule of
$W_{\psi_r}$. By Theorem \ref{tw_structure of universal} $W_{\psi_r}$ is simple and $\Phi(\ket{w})=\ket{v}\neq 0$
hence ${\rm ker}\,\Phi=\{\ket{\,0\,}\}$. Thus $\Phi$ is an isomorphism of ${\cal V}$ modules.
\end{proof}
As an immediate consequence of Corollary \ref{wniosek 1} and Theorem \ref{tw_structure of universal} one has
\begin{CC}
Any Whittaker module of  order $r$ and  rank $2r$ or $2r-1$ is simple.
\end{CC}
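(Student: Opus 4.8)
The plan is to combine the two results that immediately precede this statement, so the argument is a short two-step reduction. First I would fix an arbitrary Whittaker module $V_{\psi_r}$ of order $r$ and rank $s\in\{2r,2r-1\}$, generated by a Whittaker vector of type $\psi_r$. Corollary \ref{wniosek 1} applies verbatim under exactly this hypothesis and tells me that $V_{\psi_r}$ is isomorphic, as a ${\cal V}$-module, to the universal Whittaker module $W_{\psi_r}$ of the same type. This is the step that does the real reducing: it replaces an arbitrary such module by the one whose internal structure we already understand.

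Second, I would invoke the first part of Theorem \ref{tw_structure of universal}, which asserts precisely that $W_{\psi_r}$ is simple when $s\in\{2r,2r-1\}$. Since our rank hypothesis is identical to the hypothesis of that case, the universal module $W_{\psi_r}$ is simple. The final step is the elementary observation that simplicity is an isomorphism invariant: a ${\cal V}$-module isomorphism carries submodules bijectively onto submodules, so the lattice of submodules of $V_{\psi_r}$ is isomorphic to that of $W_{\psi_r}$, which has only $\{\ket{\,0\,}\}$ and the whole module. Hence $V_{\psi_r}$ is simple, as claimed.

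There is essentially no obstacle at this level, because all the substantive content sits upstream. The weight of the proof rests in Theorem \ref{tw_structure of universal}, which in turn relies on Theorem \ref{tw_podmodul} (every nonzero submodule contains a Whittaker vector of type $\psi_r$) together with the uniqueness statement of Theorem \ref{tw_wektor} (when $s\in\{2r,2r-1\}$ the only Whittaker vectors of type $\psi_r$ are scalar multiples of the generator $\ket{w}$), and in Corollary \ref{wniosek 1}, whose own proof uses the simplicity of $W_{\psi_r}$ to force the kernel of the canonical surjection to vanish. Consequently the only thing left to verify directly here is the transfer of simplicity across the isomorphism furnished by the corollary, which is routine and requires no further computation.
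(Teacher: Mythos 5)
Your argument is exactly the paper's: the corollary is stated there as an immediate consequence of Corollary \ref{wniosek 1} (isomorphism with the universal module) and part 1 of Theorem \ref{tw_structure of universal} (simplicity of $W_{\psi_r}$ in the high rank case), which is precisely the two-step reduction you give. The proposal is correct and matches the paper's route.
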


\section{Whittaker pairs ${\cal V}_{1,n}\subset {\cal V}$ }

Let $n$ be a  positive integer. We introduce the subalgebra
$\mathcal{V}_{1,n}$
generated by $L_{1}, L_{n}$
 $$
 \mathcal{V}_{1,n}
  = \text{span}_{\mathbb{C}}\{L_1, L_{n}, L_{n+1},  \hdots \}\subset {\cal V}.
 $$
A
Lie algebra homomorphism
$\psi_{1,n} : \mathcal{V}_{1,n} \rightarrow \mathbb{C} $
is uniquely defined by its values on the algebra generators.
As it was mentioned in the introduction the cases when $\psi_{1,n}(L_1)=0$ or $\psi_{1,n}(L_n)=0$
are already described in the previous section. So is the case $n=2$.
We shall assume therefore that homomorphism $\psi_{1,n}$ is regular i.e.
$\psi_{1,n}(L_1)\neq 0$, $\psi_{1,n}(L_n)\neq 0$ and $n>2$.
From commutation relations (\ref{virasoro}) one has $\psi_{1,n}(L_k) = 0$ for all $k>n$.
\begin{DD}
Let $V$ be a left $\mathcal{V}$-module,  $\psi_{1,n} : \mathcal{V}_{1,n} \rightarrow \mathbb{C} $
- a non trivial  Lie algebra homomorphism and $c$ - a complex number. A vector $\ket{w} \in V$ is called a Whittaker vector of
the Whittaker pair ${\cal V}_{1,n}\subset {\cal V}$,
the central charge $c$ and the type $\psi_{1,n}$
if
$$
z\ket{w} =c\ket{w},\;\;L_1 \ket{w} = \psi_{1,n}(L_1) \ket{w}
\;\;{\it and }\;\;L_k \ket{w} = \psi_{1,n}(L_k) \ket{w}\;\; {\it for} \;\;k\geqslant n.
$$
A $\mathcal{V}$-module $V$ is called a Whittaker module of the Whittaker pair ${\cal V}_{1,n}\subset {\cal V}$, the central charge $c$
 and the type $\psi_{1,n}$ if it is generated by a
Whittaker vector
of the same central charge
 and type.

\noindent We say that the Whittaker pair ${\cal V}_{1,n}\subset {\cal V}$ and its Whittaker vectors are of order  $n$.
\end{DD}
\noindent
For a Whittaker module $V$ of a Whittaker pair ${\cal V}_{1,n}\subset {\cal V}$ and a type $\psi_{1,n}$ we shall
use compact notation $V_{\psi_{1,n}}$.
As before we assume that all Whittaker vectors and modules in this section are of
Whittaker pairs ${\cal V}_{1,n}\subset {\cal V}$ and have the same fixed value of the central charge.

Both the definition of the universal Whittaker module
for the pair ${\cal V}_{1,n}\subset {\cal V}$
and  proofs of the theorems below are obvious modifications of the considerations of the previous section.

\begin{TT}
For each
Lie algebra homomorphism
$\psi_{1,n} : \mathcal{V}_{1,n} \rightarrow \mathbb{C} $
there exists  a unique, up to an isomorphism, universal Whittaker module
$W_{\psi_{1,n}} $.
\end{TT}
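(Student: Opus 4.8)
For each Lie algebra homomorphism $\psi_{1,n} : \mathcal{V}_{1,n} \rightarrow \mathbb{C}$ there exists a unique, up to an isomorphism, universal Whittaker module $W_{\psi_{1,n}}$.

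The key observation is that this theorem is the exact analogue of Theorem 2.1 (the existence and uniqueness of the universal Whittaker module for the pair $\mathcal{V}_r \subset \mathcal{V}$), and the paper itself says just before the statement that "both the definition of the universal Whittaker module for the pair $\mathcal{V}_{1,n}\subset \mathcal{V}$ and proofs of the theorems below are obvious modifications of the considerations of the previous section." So the plan is to transcribe the quotient-module construction of Theorem 2.1, replacing the single index set $\{i : i \geq r\}$ with the generator set appropriate to $\mathcal{V}_{1,n}$.

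Let me sketch the proof I would write.

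\begin{proof}
The construction follows Theorem 2.1 verbatim, with the defining ideal adapted to the pair $\mathcal{V}_{1,n}\subset\mathcal{V}$. For the Lie algebra homomorphism $\psi_{1,n} : \mathcal{V}_{1,n} \rightarrow \mathbb{C}$ set
\[
 I_{1,n} \equiv U(\mathcal{V})(L_1-\psi_{1,n}(L_1))+\sum_{i \geq n} U(\mathcal{V})(L_i - \psi_{1,n} (L_i))+U(\mathcal{V})( z-c).
\]
This is a left ideal in $U(\mathcal{V})$, so the quotient $U(\mathcal{V})/I_{1,n}$, equipped with the left action $u[\,v\,]\equiv[\,uv\,]$, is a $U(\mathcal{V})$-module generated by $[\,1\,]$. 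By construction
\[
(L_1 - \psi_{1,n}(L_1))[\,1\,] = 0,\qquad (L_n - \psi_{1,n}(L_n))[\,1\,] = 0\;\;{\rm for}\;\;n\geqslant n,\qquad (z-c)[\,1\,]=0,
\]
so $[\,1\,]$ is a Whittaker vector of the pair $\mathcal{V}_{1,n}\subset\mathcal{V}$, the central charge $c$ and the type $\psi_{1,n}$, and $U(\mathcal{V})/I_{1,n}$ is a Whittaker module.

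For universality, let $V_{\psi_{1,n}}$ be an arbitrary Whittaker module of type $\psi_{1,n}$ generated by a Whittaker vector $\ket{v}$. The map
$$
\Phi: U(\mathcal{V})/I_{1,n}\,\ni\, [\,u\,]\;\rightarrow\;u\ket{v}\,\in\,V_{\psi_{1,n}}
$$
is well defined because $\ket{v}$ is annihilated by each generator of $I_{1,n}$; it is a surjective homomorphism with $\Phi([\,1\,])=\ket{v}$. Uniqueness up to isomorphism follows from the universal property exactly as in Theorem 2.1.
\end{proof}

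The only real subtlety worth flagging—and the one step I would double-check rather than treat as automatic—is the well-definedness of $\Phi$, i.e. that the generating Whittaker vector $\ket{v}$ of \emph{any} module $V_{\psi_{1,n}}$ is indeed killed by every element of $I_{1,n}$; this is where the Whittaker conditions $L_1\ket{v}=\psi_{1,n}(L_1)\ket{v}$, $L_k\ket{v}=\psi_{1,n}(L_k)\ket{v}$ for $k\geq n$, and $z\ket{v}=c\ket{v}$ are consumed, and it is immediate from the definition of a Whittaker vector. I do not expect any genuine obstacle here, since the argument is purely formal and does not rely on the finer structure (ranks, pseudo-partition bases, Engel's theorem) developed later in Section 2; those tools are needed only for the structure theorems, not for mere existence and uniqueness of the universal object.
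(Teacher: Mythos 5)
Your proposal matches the paper's intended argument: the paper gives no separate proof of this theorem, stating only that it is an obvious modification of Theorem 2.1, and your adaptation of the quotient construction (with the ideal generated by $L_1-\psi_{1,n}(L_1)$, the $L_i-\psi_{1,n}(L_i)$ for $i\geqslant n$, and $z-c$) is exactly that modification. The only blemish is notational: in the displayed Whittaker conditions you reuse $n$ both as the fixed order and as a running index ("for $n\geqslant n$"), which should be a fresh index $k\geqslant n$.
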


\begin{TT}
 Let $W_{\psi_{1,n}}$ be the universal Whittaker module generated by $\ket{w}$.
Then the vectors
$L_{\lambda}\ket{w}$ where $\lambda$ runs over the set
$$
{\cal P}^{1,n}=\{ \lambda \in {\cal P}^n \,:\, \lambda(1)=0\},
$$
form a basis of $W_{\psi_{1,n}}$.
\end{TT}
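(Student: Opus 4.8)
The plan is to mirror the proof of Theorem 2.2 and treat this as a direct consequence of the Poincar\'e--Birkhoff--Witt theorem applied to a convenient ordering of the generators of ${\cal V}$. The key observation is that the ``free'' generators --- those $L_m$ with $m<n$ and $m\neq 1$ --- have indices in bijection with ${\cal P}^{1,n}$, whereas the remaining generators $L_1,L_n,L_{n+1},\dots$ together with $z$ span a Lie subalgebra $\mathfrak{b}={\cal V}_{1,n}\oplus\mathbb{C}z$ of ${\cal V}$.

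First I would verify that $\mathfrak{b}$ is indeed a subalgebra: since $z$ is central, and $[L_1,L_k]=(1-k)L_{k+1}$ for $k\geqslant n$ while $[L_k,L_l]=(k-l)L_{k+l}$ for $k,l\geqslant n$ all have index $\geqslant n$, the bracket never leaves $\mathfrak{b}$. The homomorphism $\psi_{1,n}$ together with $z\mapsto c$ then defines a Lie algebra character $\chi:\mathfrak{b}\to\mathbb{C}$ (the target being abelian, the bracket relations are automatically respected), which extends to an algebra homomorphism $\theta:U(\mathfrak{b})\to\mathbb{C}$ whose kernel $\ker\theta$ has the one-dimensional complement $\mathbb{C}1$ in $U(\mathfrak{b})$.

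For the spanning part I would order the PBW generators so that every free generator precedes every generator of $\mathfrak{b}$; each ordered monomial then factors as $L_\lambda\,Y$ with $\lambda\in{\cal P}^{1,n}$ and $Y\in U(\mathfrak{b})$. Since $Y\ket{w}=\theta(Y)\ket{w}$ by the Whittaker conditions, every $L_\lambda Y\ket{w}$ is a scalar multiple of $L_\lambda\ket{w}$, so the vectors $L_\lambda\ket{w}$ with $\lambda\in{\cal P}^{1,n}$ span $W_{\psi_{1,n}}$.

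The main obstacle is linear independence, which I would obtain from the direct-sum decomposition $U({\cal V})=\big(\bigoplus_{\lambda\in{\cal P}^{1,n}}\mathbb{C}L_\lambda\big)\oplus I_{1,n}$, where $I_{1,n}$ is the defining left ideal of the universal module. Writing $U({\cal V})=\mathrm{span}\{L_\lambda\}\cdot U(\mathfrak{b})$ via the PBW isomorphism and using $U(\mathfrak{b})\cdot\ker\theta=\ker\theta$ (as $\ker\theta$ is a left ideal of $U(\mathfrak{b})$), one gets $I_{1,n}=U({\cal V})\cdot\ker\theta=\mathrm{span}\{L_\lambda\}\otimes\ker\theta$, whence $U({\cal V})=\mathrm{span}\{L_\lambda\}\otimes(\mathbb{C}1\oplus\ker\theta)=\mathrm{span}\{L_\lambda\}\oplus I_{1,n}$. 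Passing to the quotient $W_{\psi_{1,n}}=U({\cal V})/I_{1,n}$ then shows that the images $L_\lambda\ket{w}$ form a basis. The one point demanding care is the identity $U({\cal V})\cdot\ker\theta=I_{1,n}$: it rests on the fact that $\ker\theta$ is generated as an ideal of $U(\mathfrak{b})$ precisely by the elements $L_1-\psi_{1,n}(L_1)$, $L_k-\psi_{1,n}(L_k)$ for $k\geqslant n$, and $z-c$, which are exactly the generators of $I_{1,n}$.
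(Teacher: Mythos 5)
Your proof is correct and takes essentially the same route as the paper, which states this result (and its Section~2 counterpart, Theorem~2.2) as a direct consequence of the PBW theorem; your decomposition $U({\cal V})\cong\mathrm{span}\{L_\lambda\}\otimes U(\mathfrak{b})$ together with the identification $I_{1,n}=U({\cal V})\cdot\ker\theta$ is the standard way of making that assertion precise.
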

The counterpart of Lemma \ref{lemat_plus} takes the form
\begin{LL}
\label{lemat_plus_1}
Let $\ket{w}$ be the universal Whittaker vector of a type $\psi_{1,n}$. Then
for any pseudo partition $\lambda\in {\cal P}^{1,n}$
\begin{enumerate}
\item
$[L_m,L_{\lambda_{+}}]\ket{w} = 0$ for $ m>n $,
\item
$   {\rm max}^{\#}[L_n,L_{\lambda_{+}}]\ket{w}<\#\lambda$,
\item
there exists a positive integer $m_\lambda$ such that

$  {\rm max}^\#\,\underset{m_\lambda}{\underbrace{[L_1,[L_1,\hdots[L_1}},L_{\lambda_{+}}]\hdots]\ket{w}<\#\lambda$.
\end{enumerate}
\end{LL}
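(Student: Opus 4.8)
The plan is to mirror the proof of Lemma \ref{lemat_plus}, exploiting that for $m\geqslant 1$ and $\mu\geqslant 0$ the bracket carries no central term, $[L_m,L_\mu]=(m-\mu)L_{m+\mu}$, so the Leibniz expansion
\[
[L_m,L_{\lambda_+}]=\sum_i (m-\mu_i)\,L_{\mu_1}\cdots L_{m+\mu_i}\cdots L_{\mu_p}
\]
(where $L_{\lambda_+}=L_{\mu_1}\cdots L_{\mu_p}$ with $0\leqslant\mu_1\leqslant\dots\leqslant\mu_p\leqslant n-1$, all $\mu_i\neq 1$, and $p=\#\lambda$) only raises indices. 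The governing fact throughout is that a generator reaching $\ket{w}$ acts as $0$ when its index exceeds $n$, and as the scalar $\psi_{1,n}(L_n)$ resp.\ $\psi_{1,n}(L_1)$ when its index equals $n$ resp.\ $1$.

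For part (1) I would observe that when $m>n$ every summand contains a generator $L_{m+\mu_i}$ of index $>n$. Commuting it to the right past the remaining nonnegative generators only produces generators of index $>n$, since a bracket $[L_{m+\mu_i},L_{\mu_j}]$ has index $m+\mu_i+\mu_j>n$; so by induction on the number of factors to its right, every resulting monomial ends with a generator of index $>n$ hitting $\ket{w}$ and therefore vanishes. Part (2) uses the same expansion with $m=n$: summands with $\mu_i\geqslant 2$ contain $L_{n+\mu_i}$ of index $>n$ and vanish by part (1), so the only survivors replace an $L_0$ by $L_n$; moving that $L_n$ to the right, the top-length term absorbs it as $\psi_{1,n}(L_n)$ and leaves $p-1$ factors while all commutators lower the count further, giving ${\rm max}^\#<\#\lambda$.

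Part (3) is the step I expect to be the real obstacle, since $L_1$ raises indices without immediately leaving the relevant range. I would compute the iterated adjoint action $(\mathrm{ad}_{L_1})^a L_\mu=[\prod_{b=0}^{a-1}(1-\mu-b)]\,L_{\mu+a}$ and note the crucial coefficient collapse: for $\mu=0$ it is nonzero only for $a\in\{0,1\}$ (so $L_0\mapsto L_1\mapsto 0$), while for $\mu\geqslant 2$ it never vanishes. Expanding $(\mathrm{ad}_{L_1})^m L_{\lambda_+}$ by the multinomial Leibniz rule as a sum over distributions $\sum_i a_i=m$ of monomials $L_{\mu_1+a_1}\cdots L_{\mu_p+a_p}$, such a monomial can contribute a length-$\#\lambda$ term on $\ket{w}$ only if every raised index $\mu_i+a_i$ stays in $\{0,2,3,\dots,n-1\}$: any index landing on $1$ or on $\geqslant n$ is absorbed into a scalar or annihilates $\ket{w}$, strictly lowering the length (as reordering into the PBW basis can only merge factors and thus decrease the count). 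An $L_0$ factor then forces $a_i=0$, and a factor $L_{\mu_i}$ with $\mu_i\geqslant 2$ forces $a_i\leqslant n-1-\mu_i$.

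Consequently any length-preserving monomial obeys $m=\sum_i a_i\leqslant R:=\sum_{\mu_i\geqslant 2}(n-1-\mu_i)$, a finite integer depending only on $\lambda$ and $n$. Choosing $m_\lambda=R+1$ (indeed any $m>R$) exceeds the total admissible raise, so no length-$\#\lambda$ contribution can survive and ${\rm max}^\#\,(\mathrm{ad}_{L_1})^{m_\lambda}L_{\lambda_+}\ket{w}<\#\lambda$, as required. The delicate point is exactly this bookkeeping: that preserving length forces every index to remain inside the metastable set $\{0,2,\dots,n-1\}$, that the per-generator budget is capped (in particular the $a\geqslant 2$ coefficient collapse for $L_0$), and that these caps bound the total number of allowed $L_1$-commutations.
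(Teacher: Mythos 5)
Your proof is correct. The paper itself offers no argument for this lemma --- it is stated as the ``counterpart'' of Lemma \ref{lemat_plus} with the proof declared an obvious modification --- so the useful comparison is with the proof of Lemma \ref{lemat_plus}. Your parts (1) and (2) are exactly that modification: raise indices by Leibniz, note that an index $>n$ annihilates $\ket{w}$ after reordering and an index equal to $n$ is absorbed as $\psi_{1,n}(L_n)$ with a drop in length. Part (3) is the genuinely new content that the paper leaves implicit, since a single application of ${\rm ad}_{L_1}$ need not lower the length (e.g.\ $[L_1,L_2]\ket{w}=-L_3\ket{w}$), and your quantitative treatment of it is sound: the closed form $({\rm ad}_{L_1})^a L_\mu=\bigl[\prod_{b=0}^{a-1}(1-\mu-b)\bigr]L_{\mu+a}$ is right (no central term arises since all indices are nonnegative), the collapse $({\rm ad}_{L_1})^aL_0=0$ for $a\geqslant 2$ correctly caps the budget of the $L_0$ factors at zero once one also excludes landing on index $1$, and the bound $m\leqslant R=\sum_{\mu_i\geqslant 2}(n-1-\mu_i)$ for a length-preserving distribution is not only valid but sharp (for $L_0L_2$ with $n=4$ one checks $m=1$ fails and $m=2=R+1$ works). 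Two minor points: cancellations between different distributions $(a_1,\dots,a_p)$ could only help, as you only need an upper bound on the length, so nothing further is required there; and the case $\#\lambda=0$ is degenerate (the commutator vanishes identically), which is the same convention already tacit in Lemma \ref{lemat_plus}.
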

By strightforward extensions of Lemmas \ref{lemat_minus}, \ref{lemat_kplus2r} one gets

\begin{LL}
\label{lemat_no Whittaker vectors_1}
Let $W_{\psi_{1,n}}$ be the universal Whittaker module of type $\psi_{1,n}$ and let
$\ket{u}\in W_{\psi_{1,n}}$ be an arbitrary vector.
If ${\rm max}\,\ket{u} >0$ then $\ket{u}$ is not a Whittaker vector of any type.
\end{LL}

\begin{LL}
\label{lemat_no Whittaker vectors_2}
Let $W_{\psi_{1,n}}$ be the universal Whittaker module $ W_{\psi_{1,n}}$ of type $\psi_{1,n}$ and let
$$
\ket{u}=\sum p_{\lambda} L_{\lambda}\ket{w}
$$
be an arbitrary vector in $ W_{\psi_{1,n}}$.
If $p_\lambda\neq 0$ for pseudo-partitions $\lambda\in {\cal P}^{1,n}$ with $\lambda(0)\neq0$
then $\ket{u}$ is not a Whittaker vector of any type.
\end{LL}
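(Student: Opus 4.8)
The plan is to reduce to level zero and then let the top generator $L_n$ of ${\cal V}_{1,n}$ detect the offending factors $L_0$. First, by Lemma~\ref{lemat_no Whittaker vectors_1} I may assume $\text{max}\,\ket{u}=0$, since otherwise the claim is immediate; then every $\lambda$ with $p_\lambda\neq 0$ has $|\lambda|=0$, so $L_\lambda=L_0^{\lambda(0)}M$ with $M=L_2^{\lambda(2)}\cdots L_{n-1}^{\lambda(n-1)}$ involving only indices $2,\dots,n-1$, and by hypothesis some such term with $\lambda(0)\geq 1$ survives.

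The computational heart is the behaviour of $L_m$, $m\geq n$, on these vectors. From $[L_m,L_0]=mL_m$ one gets the operator identity $L_m L_0^{a}=(L_0+m)^{a}L_m$, while $[L_m,M]\ket{w}$ produces only modes $L_{>n}$, which annihilate $\ket{w}$; hence $L_m M\ket{w}=0$ for $m>n$, so $L_m\ket{u}=0$ there, and $L_n M\ket{w}=\psi_{1,n}(L_n)M\ket{w}$. Consequently
\[
L_n\,L_0^{a}M\ket{w}\;=\;\psi_{1,n}(L_n)\,(L_0+n)^{a}\,M\ket{w},
\qquad
(L_0+n)^{a}M\ket{w}=L_0^{a}M\ket{w}+\sum_{i<a}\binom{a}{i}\,n^{\,a-i}L_0^{i}M\ket{w}.
\]
Thus $(L_n-\psi_{1,n}(L_n))$ equals $\psi_{1,n}(L_n)$ times a strictly length-lowering operator $D$ which annihilates the $L_0$-free vectors ($a=0$) and, on $L_0^{a}M\ket{w}$ with $a\geq1$, has leading term $a\,n\,L_0^{a-1}M\ket{w}$.

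It remains to see that $D\ket{u}\neq0$; this is the only delicate point, since one must rule out cancellation among the lower-length corrections. I would group the basis by the $L_0$-free factor $M$: the operator $D$ preserves each tower $\{L_0^{a}M\ket{w}:a\geq0\}$ and is strictly lower triangular in $a$, and different towers occupy disjoint sets of basis vectors. On the tower carrying the hypothesized term, letting $A\geq1$ be the largest power of $L_0$ with nonzero coefficient $c_A$, the image $D\ket{u}$ has coefficient $c_A\,A\,n\neq0$ on $L_0^{A-1}M\ket{w}$, which nothing else can cancel. Hence $D\ket{u}\neq0$ and $\text{max}^{\#}D\ket{u}<\text{max}^{\#}\ket{u}$, so, since $\psi_{1,n}(L_n)\neq0$ by regularity, $L_n\ket{u}=\psi_{1,n}(L_n)(\ket{u}+D\ket{u})$ is a nonzero vector not proportional to $\ket{u}$. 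As in the proof of Theorem~\ref{tw_wektor} this forbids $\ket{u}$ from being a Whittaker vector of ${\cal V}_{1,n}$ of any type, or of any pair of order at most $n$; higher-order types are excluded because $L_m\ket{u}=0$ for $m>n$ would force the defining homomorphism to be trivial (or, for ${\cal V}_{1,n'}$, non-regular).
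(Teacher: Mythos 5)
Your proof is correct and follows essentially the same route as the paper's: both reduce to level zero via Lemma \ref{lemat_no Whittaker vectors_1} and then show that $(L_n-\psi_{1,n}(L_n))$ acts on the level-zero basis by strictly lowering the power of $L_0$, with nonzero leading coefficient $a\,n\,\psi_{1,n}(L_n)$ on $L_0^{a-1}M\ket{w}$, so that $\ket{u}$ cannot be an $L_n$-eigenvector. Your per-tower triangularity bookkeeping via the identity $L_nL_0^a=(L_0+n)^aL_n$ replaces the paper's argument using the global maximum of $\lambda(0)$, but the idea is the same (and you are, if anything, more explicit than the paper about non-proportionality and about excluding types of order higher than $n$).
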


\begin{proof}
By Lemma \ref{lemat_no Whittaker vectors_1} it is enough to consider vectors of the form
$$
\ket{u}=\sum p_{\lambda} L_{\lambda}\ket{w}
$$
where $|\lambda|=0$ for all $\lambda$ in the sum.
Let $\Lambda^0$ be
the set of all partitions
 such that  $p_\lambda\neq 0$ and $\lambda(0) \neq 0$.
Let
$$
\lambda^0_{\rm max}=
{\rm max} \{\lambda(0)\,:\, \lambda\in\Lambda^0\}.
$$
One can write
\begin{eqnarray*}
(L_n - \psi_{1,n}(L_n))\ket{u}
&=&
\underset{\lambda(0)=\lambda^0_{\rm max}}{\sum_{\lambda \in \Lambda^0 }}
 p_{\lambda} [L_n, L_{\lambda}]\ket{w}
+
\underset{\lambda(0)<\lambda^0_{\rm max}}{\sum_{\lambda \in \Lambda^0 }}
 p_{\lambda} [L_n, L_{\lambda}]\ket{w}
\\
&=&
n\lambda^0_{\rm max}\, \psi_{1,n}(L_n)\underset{\lambda(0)=\lambda^0_{\rm max}}{\sum_{\lambda \in \Lambda^0 }}
p_{\lambda} L_0^{\lambda(0)-1}L_{1}^{\lambda(1)}\dots L_{n-1}^{\lambda(n-1)} \ket{w}
\\
&+&
{\sum_{\lambda'(0)<\lambda^0_{\rm max}-1}}
 p'_{\lambda'}  L_{\lambda'}\ket{w}.
\end{eqnarray*}
All terms of the first sum with the whole second sum  form a set of linearly independent vectors.
Hence, as each term in the first sum does not vanish neither the whole sum does.
\end{proof}

Let us now turn to the analysis of Whittaker vectors in the universal Whittaker modules $ W_{\psi_{1,n}}$ of a type $\psi_{1,n}$.
By Lemma \ref{lemat_no Whittaker vectors_1} and \ref{lemat_no Whittaker vectors_2} the only possible Whittaker vectors in
$ W_{\psi_{1,n}}$ are of the form
$$
\ket{u}=
\underset{\lambda(0)=\lambda(1)=0}{\sum_{|\lambda|=0}}
p_{\lambda} L_{\lambda}\ket{w}.
$$
All vectors of this form satisfy
$$
L_{m} \ket{u}= \psi_{1,n}(L_{n})\delta_{n,m} \ket{u},\;\;\;m\geqslant n
$$
and are therefore Whittaker vectors of the type
$$
\psi_n=\{\psi_{1,n}(L_n),0,\dots,0\}.
$$
We shall call them trivial.

\begin{LL}
\label{lemat_no Whittaker vectors_3}
Let $W_{\psi_{1,n}}$ be the universal Whittaker module $ W_{\psi_{1,n}}$ of a type $\psi_{1,n}$.
The only possible nontrivial Whittaker vectors $\ket{u}\in W_{\psi_{1,n}}$ are of the type $\psi_{1,n}$.
\end{LL}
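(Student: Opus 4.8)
The plan is to reduce the statement to a single eigenvalue computation and then pin down that eigenvalue by a grading argument. By Lemmas \ref{lemat_no Whittaker vectors_1} and \ref{lemat_no Whittaker vectors_2} any Whittaker vector in $W_{\psi_{1,n}}$ already has the restricted form
$$
\ket{u}=\underset{\lambda(0)=\lambda(1)=0}{\sum_{|\lambda|=0}}p_\lambda L_\lambda\ket{w},
$$
so every $\lambda$ occurring uses only the indices $2,3,\dots,n-1$, and such a vector automatically satisfies $L_m\ket{u}=\psi_{1,n}(L_n)\delta_{n,m}\ket{u}$ for $m\geqslant n$. A nontrivial Whittaker vector is in addition an eigenvector of $L_1$, say $L_1\ket{u}=\mu\ket{u}$. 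Hence the whole statement reduces to showing that necessarily $\mu=\psi_{1,n}(L_1)$; together with the automatic conditions for $m\geqslant n$ this forces the type to be $\psi_{1,n}$.

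First I would use $L_1\ket{w}=\psi_{1,n}(L_1)\ket{w}$ to write the eigenvalue condition as
$$
\sum_\lambda p_\lambda[L_1,L_\lambda]\ket{w}=(\mu-\psi_{1,n}(L_1))\ket{u},
$$
and then control $[L_1,L_\lambda]\ket{w}$ by length. Since $[L_1,L_j]=(1-j)L_{j+1}$, the commutator $[L_1,L_\lambda]$ is a sum of monomials in which exactly one index $j\in\{2,\dots,n-1\}$ is raised to $j+1$. If $j+1\leqslant n-1$ the raised monomial is again of length $\#\lambda$ after reordering, the reordering commutators $[L_a,L_b]=(a-b)L_{a+b}$ producing only terms of strictly smaller length. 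If $j+1=n$ the resulting $L_n$ may be pushed to the right: all commutators it generates raise indices above $n$ and annihilate $\ket{w}$, while $L_n\ket{w}=\psi_{1,n}(L_n)\ket{w}$ leaves a term of length $\#\lambda-1$. Thus $[L_1,L_\lambda]\ket{w}$ splits into a length-preserving part, call it $D(L_\lambda\ket{w})$ and given by the interior raisings $j\mapsto j+1\leqslant n-1$, plus terms of length strictly below $\#\lambda$.

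Second I would isolate the top-length component. Let $N={\rm max}^{\#}\ket{u}$ and let $\ket{u}_N\neq0$ collect the length-$N$ terms; extracting the length-$N$ part of the displayed identity gives
$$
D(\ket{u}_N)=(\mu-\psi_{1,n}(L_1))\ket{u}_N,
$$
since only the interior raisings of the length-$N$ partitions survive at length $N$. Here $D$ is a linear operator on the finite-dimensional span of the basis vectors $L_\lambda\ket{w}$ with $\#\lambda=N$ and indices in $\{2,\dots,n-1\}$. The key point is that $D$ is \emph{nilpotent}: each interior raising increases the weight $\sum_j j\,\lambda(j)$ by exactly one, and this weight is bounded above by $N(n-1)$, so a sufficiently high power of $D$ vanishes. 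Therefore $0$ is the only eigenvalue of $D$, and since $\ket{u}_N$ is a nonzero eigenvector we conclude $\mu-\psi_{1,n}(L_1)=0$. Hence $L_1\ket{u}=\psi_{1,n}(L_1)\ket{u}$, which with the automatic conditions for $m\geqslant n$ exhibits $\ket{u}$ as a Whittaker vector of type $\psi_{1,n}$.

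The step I expect to require the most care is the length bookkeeping of the second paragraph: one must verify that reordering the raised monomials and moving an $L_n$ to the right never feed anything back into length $N$, so that the length-$N$ projection of the identity is governed solely by the length-preserving operator $D$. Once this is clean, the nilpotency of $D$ is immediate and the eigenvalue is pinned down. As a consistency check, for $n=3$ the set of interior indices is just $\{2\}$ and $D\equiv0$, so the conclusion $\mu=\psi_{1,n}(L_1)$ holds at once, in agreement with the general argument.
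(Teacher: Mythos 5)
Your treatment of the $L_1$ eigenvalue is correct and is a clean alternative to the paper's own device: where the paper orders the partitions $(\lambda(2),\dots,\lambda(n-1))$ lexicographically and shows that $\lambda_{\min}$ strictly drops under $L_1-\psi_{1,n}(L_1)$, you isolate the length-preserving part $D$ of the adjoint action of $L_1$ and observe that it is nilpotent because it strictly raises the weight $\sum_j j\,\lambda(j)$. Both arguments pin the $L_1$ eigenvalue to $\psi_{1,n}(L_1)$, and your bookkeeping of the lower-length corrections (reordering, and pushing a produced $L_n$ to the right) is sound.

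There is, however, a genuine gap: you assert that a nontrivial Whittaker vector ``is in addition an eigenvector of $L_1$'' and thereby reduce the whole lemma to that single eigenvalue computation. This reduction is not valid. ``Nontrivial'' only excludes the type $\psi_n=\{\psi_{1,n}(L_n),0,\dots,0\}$ of the pair ${\cal V}_n\subset{\cal V}$; a priori the vector $\ket{u}$ could instead be a Whittaker vector of a pair ${\cal V}_r\subset{\cal V}$ with $r\leqslant n-1$, or of a pair ${\cal V}_{1,m}\subset{\cal V}$ with $m<n$, and every such type requires $\ket{u}$ to be an eigenvector of some $L_k$ with $2\leqslant k\leqslant n-1$ --- not of $L_1$ alone. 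The paper devotes the first half of its proof to ruling exactly this out: using the lexicographic order it shows $\lambda_{\max}(L_k\ket{u})>\lambda_{\max}(\ket{u})$ for $k=2,\dots,n-1$, so $\ket{u}$ is never an eigenvector of these generators. Your proof omits this case entirely, so as written it does not establish the lemma. The omission is repairable in your own language: for $2\leqslant k\leqslant n-1$ one has $L_kL_\lambda\ket{w}=L_{\lambda\cup(k)}\ket{w}+(\text{terms of length}\leqslant\#\lambda)$, and since $\lambda\mapsto\lambda\cup(k)$ is injective the length-$(N+1)$ component of $L_k\ket{u}$ is nonzero, whence ${\rm max}^\#(L_k\ket{u})={\rm max}^\#\ket{u}+1$ and $L_k\ket{u}$ cannot be proportional to $\ket{u}$. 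That additional step needs to be stated and proved before the conclusion follows.
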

\begin{proof}
We first show that a vector of the form
\begin{equation}
\label{form}
\ket{u}=
\underset{\lambda(0)=\lambda(1)=0}{\sum_{|\lambda|=0}}
p_{\lambda} L_{\lambda}\ket{w}.
\end{equation}
is not an eigenvector of any $L_k, k=2,\dots,L_{n-1}$.
To this end let us introduce the lexicographic order in the set of partitions of the form
$
\lambda=(\lambda(2),\dots,\lambda(n-1))$:
$$
\lambda<\lambda' \Leftrightarrow
\begin{array}{llllllll}
\lambda(2)<\lambda'(2)\;\vee\;
\\
(\lambda(2)=\lambda'(2)\;\wedge\;\lambda(3)<\lambda'(3))\;\vee\;
\\
\vdots
\\
(\lambda(2)=\lambda'(2)\;\wedge\dots\wedge\; \lambda(n-2)=\lambda'(n-2)\; \wedge\;\lambda(n-1)<\lambda'(n-1)).
\end{array}
$$
For vectors  (\ref{form}) we define
$$
\lambda_{\max}(\ket{u}) = \max \{\lambda \,:\,p_\lambda\neq 0\}.
$$
One easily checks that for $k=2,\dots,n-1$
$$
\lambda_{\max}(L_k \ket{u})>\lambda_{\max}(\ket{u})
$$
hence $\ket{u}$ is not an eigenvector of $L_k$.

 Let
$$
\lambda_{\min}(\ket{u}) = \min \{\lambda \,:\,p_\lambda\neq 0\}.
$$
For each partition $\lambda=(\lambda(2),\dots,\lambda(n-1))$
$$
 \lambda_{\min}(\left[
L_1,  L_{\lambda}
\right]\ket{w})<\lambda
$$
hence
$$
\lambda_{\min}\left(L_1 \ket{u} -\psi_{1,n}(L_1)\ket{u}
\right)
\leqslant
\lambda_{\min}(\left[
L_1,  L_{\lambda_{\min}(\ket{u})}
\right]\ket{w})
<\lambda_{\min}(\ket{u}).
$$
It follows that if $L_1 \ket{u} -\psi_{1,n}(L_1)\ket{u}\neq 0$
it is not proportional to $\ket{u}$.
Thus if
$
L_1\ket{u}= \lambda \ket{u}
$
then $\lambda = \psi_{1,n}(L_1)$.
\end{proof}

\begin{TT}
\label{Whittaker vectors_n34}
Let $\ket{w}$ be a universal Whittaker vector of a type  $\psi_{1,n}$.
\begin{enumerate}
\item
For $n=3$ there are no nontrivial Whittaker vectors in $ W_{\psi_{1,n}}$ of any type.
\item
For $n=4$ the subspace of all nontrivial Whittaker vectors of the type $\psi_{1,n}$ in $ W_{\psi_{1,n}}$
is span by the  family of vectors
\begin{equation}
\label{family4}
\begin{array}{llll}
\ket{w^{\,l}_2}&=& \displaystyle
\sum\limits_{k=0}^l \alpha_k L_2^{l-k}L_3^{2k} \ket{w},\;\;\;l\in \mathbb{N},
\\
\alpha_k &=& - \displaystyle {l+1-k\over 4k\,\psi_{1,4}(L_4)}\,\alpha_{k-1},\;\;\;\alpha_0\neq 0.
\end{array}
\end{equation}
There are no other nontrivial Whittaker vectors  in $ W_{\psi_{1,n}}$ of any type.
\end{enumerate}
\end{TT}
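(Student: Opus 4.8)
The plan is to reduce the statement, via the three preceding lemmas, to a single scalar equation and then solve it explicitly in each case. By Lemmas \ref{lemat_no Whittaker vectors_1}, \ref{lemat_no Whittaker vectors_2} and \ref{lemat_no Whittaker vectors_3} any nontrivial Whittaker vector must be of the form (\ref{form}) and of type $\psi_{1,n}$; since every such vector already satisfies $L_m\ket{u}=\psi_{1,n}(L_n)\delta_{n,m}\ket{u}$ for $m\geqslant n$, the sole remaining requirement is the eigenvalue equation $(L_1-\psi_{1,n}(L_1))\ket{u}=0$. For $n=3$ the admissible monomials are the $L_2^m\ket{w}$, and for $n=4$ the $L_2^aL_3^b\ket{w}$, so in each case I would write $\ket{u}$ as a finite combination of these and impose that one equation.

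The computational core is a single \emph{auxiliary fact}: for any monomial $M$ in the generators $L_2,\dots,L_{n-1}$ one has $L_kM\ket{w}=0$ whenever $k>n$, while $L_nM\ket{w}=\psi_{1,n}(L_n)M\ket{w}$. I would prove this by induction on the length of $M$ using $[L_k,L_j]=(k-j)L_{k+j}$: commuting $L_k$ with $k\geqslant n$ to the right across any generator $L_j$, $2\leqslant j\leqslant n-1$, only produces $L_{k+j}$ with $k+j>n$, which annihilates $\ket{w}$ by the inductive hypothesis, while the remaining term reproduces the statement one step lower. Granting this, I expand $[L_1,L_2^aL_3^b]=[L_1,L_2^a]L_3^b+L_2^a[L_1,L_3^b]$ with $[L_1,L_2]=-L_3$, $[L_1,L_3]=-2L_4$, and carry the resulting generators onto $\ket{w}$. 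For $n=3$ this gives $(L_1-\psi_{1,3}(L_1))L_2^m\ket{w}=-m\,\psi_{1,3}(L_3)\,L_2^{m-1}\ket{w}$; since $\psi_{1,3}(L_3)\neq0$, the eigenvalue equation forces the coefficient of every $L_2^m\ket{w}$ with $m\geqslant1$ to vanish, so $\ket{u}$ is a scalar multiple of $\ket{w}$; hence there are no nontrivial Whittaker vectors, which is part~1.

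For $n=4$ the same expansion yields the two-term formula
\begin{equation*}
(L_1-\psi_{1,4}(L_1))\,L_2^aL_3^b\ket{w}=-a\,L_2^{a-1}L_3^{b+1}\ket{w}-2b\,\psi_{1,4}(L_4)\,L_2^aL_3^{b-1}\ket{w}.
\end{equation*}
Writing $\ket{u}=\sum_{a,b}p_{a,b}L_2^aL_3^b\ket{w}$ and collecting the coefficient of each basis vector $L_2^cL_3^d\ket{w}$, the eigenvalue equation becomes $p_{c,1}=0$ for all $c$ together with $(c+1)p_{c+1,d-1}+2(d+1)\psi_{1,4}(L_4)\,p_{c,d+1}=0$ for $d\geqslant1$. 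First I would read off, by induction on $d$, that all $p_{c,b}$ with $b$ odd vanish, and then solve the surviving relation $p_{c,2k}=-\tfrac{c+1}{4k\,\psi_{1,4}(L_4)}\,p_{c+1,2(k-1)}$, which expresses each $p_{c,2k}$ as a multiple of $p_{c+k,0}$. Hence the solution space is parametrised by the free data $\{p_{c,0}\}_{c\geqslant0}$ (finitely many nonzero), and the choice $p_{c,0}=\delta_{c,l}$ produces precisely the vector $\ket{w^{\,l}_2}$ of (\ref{family4}) with $\alpha_k=p_{l-k,2k}$, the relation $\alpha_k=-\tfrac{l+1-k}{4k\,\psi_{1,4}(L_4)}\alpha_{k-1}$ being exactly the recursion evaluated at $c=l-k$. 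As the $\ket{w^{\,l}_2}$ are linearly independent (distinguished by their $b=0$ terms $\alpha_0L_2^l\ket{w}$) and span every solution, they form a basis of the space of nontrivial Whittaker vectors.

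The two commutator expansions are routine. The one point demanding care — and the reason the cases diverge — is structural: for $n=3$ the action of $L_1$ on $L_2^m\ket{w}$ produces a single lower term with no second generator available to cancel it, forcing the collapse to $\ket{w}$, whereas for $n=4$ the generator $L_3$ supplies the compensating term, which is what makes the nontrivial recursion, and hence the infinite family, possible. I expect the main obstacle to be organising the $n=4$ recursion so as to see cleanly that the odd powers of $L_3$ drop out and that the even part is governed by the single-step relation above; once this is isolated, matching to (\ref{family4}) and verifying linear independence are immediate.
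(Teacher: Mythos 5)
Your proposal is correct and follows essentially the same route as the paper: by Lemma \ref{lemat_no Whittaker vectors_3} everything reduces to the single eigenvalue equation $(L_1-\psi_{1,n}(L_1))\ket{u}=0$, which is then analysed through the same commutator formula $[L_1,L_2^aL_3^b]\ket{w}=-aL_2^{a-1}L_3^{b+1}\ket{w}-2b\,\psi_{1,4}(L_4)L_2^aL_3^{b-1}\ket{w}$, with the same conclusion that odd powers of $L_3$ are excluded. The only (cosmetic) difference is the last step of the $n=4$ case, where the paper repeatedly subtracts the known vectors $\ket{w^{\,k}_2}$ to eliminate the minimal $L_3$-free term, while you solve the coefficient recursion directly and parametrise the solution space by the free data $\{p_{c,0}\}$; both arguments are valid and yield the same result.
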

\begin{proof}
By Lemma \ref{lemat_no Whittaker vectors_3} it is enough  to look for nontrivial solutions of the equation
$$
L_{1} \ket{u}= \psi_{1,3}(L_1)) \ket{u}.
$$
For $n=3$ the only possibility is $\ket{u}= \sum p_n L_{2}^n\ket{w}$ for which
$$
(L_1-\psi_{1,3}(L_1))\ket{u}= - \sum n \,p_n  \psi_{1,3}(L_3)L_2^{n-1}\ket{w}.
$$
As the sum on the r.h.s is finite and non vanishing there are no nontrivial Whittaker vectors in $W_{\psi_{1,3}}$ of any type.

For $n=4$ one checks by explicite calculations that vectors (\ref{family4}) are Whittaker vectors of the type $\psi_{1,4}$.
We shall show that for any  Whittaker vector of the type $\psi_{1,4}$ the decomposition
$$
\ket{u}=
{\sum_{\lambda=(\lambda(2),\lambda(3))}}
p_{\lambda} L_{\lambda}\ket{w}.
$$
contains a nonzero term with $\lambda=(k,0)$. By assumption
\begin{equation}
\label{vv}
(L_1-\psi_{1,4}(L_1))\ket{u}={\sum_{\lambda=(\lambda(2),\lambda(3))}}
p_{\lambda} [L_1,L_{\lambda}]\ket{w}=0.
\end{equation}
For each term one has
$$
[L_1,L_2^{\lambda(2)}L_3^{\lambda(3)}]\ket{w}=
\left(-\lambda(2)L_2^{\lambda(2)-1}L_3^{\lambda(3)+1}-2\lambda(3)\psi_{1,n}(L_4)L_2^{\lambda(2)}L_3^{\lambda(3)-1}
\right)\ket{w}.
$$
In order to achieve cancelation of terms on the r.h.s of (\ref{vv}) if the sum contains nonzero term  with
$\lambda=(k,l)$ it has to contain non vanishing terms with $\lambda'=(k-1,l+2)$ and $\lambda''=(k+1,l-2)$.
It follows that the following terms must have nonzero coefficients
$$
L_2^{k'}\ket{w},\;\;\; L_2^{k''}L_3\ket{w}.
$$
The second case however cannot be realized as a non vanishing term in the decomposition of a Whittaker vector of the type $\psi_{1,4}$ since
$$
[L_1,L_2^{k''}L_3]\ket{w}=- k''L_2^{k''-1}L_3^2\ket{w} -2\psi_{1,4}(L_4)L_2^{k''}\ket{w}
$$
and the second term cannot be canceled on the r.h.s. of (\ref{vv}). (This implies in particular that in decomposition (\ref{vv})  $\lambda(3)$
assumes only even values).
It follows that $\ket{u}$ contains at least one term with $\lambda=(k,0)$. Let $\lambda_{\min}=(k_{\min},0)$ be the partition  of this type with smallest $k$.
Then the vector
$$
\ket{u'}=\ket{u}-p_{\lambda_{\min}}\ket{w^{k}_2}
$$
is a new Whittaker vector of the type $\psi_{1,n}$ with $k'_{\min}>k_{\min}$. Hence repeating the subtraction above  a finite number of times one has to get a zero vector.
Thus $\ket{u}$ is a linear combination of vectors (\ref{family4}).

\end{proof}

Construction (\ref{family4}) of Whittaker vectors
can be generalized for $n>4$ as follows
$$
\begin{array}{rlll}
\ket{w^l_{2;n}}&=& \displaystyle\sum\limits_{k=0}^l \alpha_k L_{n-2}^{l-k}L_{n-1}^{2k} \ket{w},\;\;\;l\in \mathbb{N},
\\[15pt]
\alpha_0&\neq& 0,
\\
\alpha_{k+1} &=& - \displaystyle {(n-3)(l-k)\over 2(n-2) (k+1)\,\psi_{1,n}(L_n)}\,\alpha_{k}.
\end{array}
$$
Another generalization  is given by
$$
\begin{array}{rlll}
\ket{w^1_{l;n}}&=&  \displaystyle \sum\limits_{k=\,l}^{n-1} \alpha_k L_{k} L_{n-1}^{k-l} \ket{w},\;\;\;2\leqslant l\leqslant n-2,
\\[15pt]
\alpha_l &\neq&  0,
\\
\alpha_{k+1}   &=& -\displaystyle {k-1 \over (n-2) (k+1-l)\,\psi_{1,n}(L_n)}\,\alpha_{k},\;\;\;
l\leqslant k\leqslant n-3,
\\
\alpha_{n-1}&=&
\displaystyle- {n-3\over (n-2)(n-l)\,\psi_{1,n}(L_n)}\,\alpha_{n-2}.
\end{array}
$$
Constructions above do not exhaust all possibilities. As an illustration we give two more examples for $n=5$
\begin{eqnarray*}
\ket{w^{1,1}_{2,3;5}}&=& \left(
L_2L_3 
-{1\over 3\mu}L_2L_4^2 
-{1\over 3\mu} L_3^2L_4 
+{5\over 3^3 \mu^2} L_3L_4^3 
-{2\over 3^4\mu^3}L^5_4
\right)\ket{w}
\\
\ket{w^2_{2;5}}&=& \left(
L_2^2  -{2\over 3\mu}L_2L_3L_4 +{2^2\over 3^3 \mu^2} L_2L_4^3+{1\over 3^2\mu^2 }L_3^2L_4^2
 -{4\over 3^4\mu^3}L_3L_4^4 -{1\over 3} L_4+{4\over 3^6\mu^4}L_4^6
\right)\ket{w}
\end{eqnarray*}
where $\mu = \psi_{1,5}(L_5)$.
The dimension of the subspace of Whittaker vectors of the type $\psi_{1,n}$ grows
very fast with $n$. A general discussion  is rather
involved and goes beyond the scope of this paper.

We close this section with the theorem characterizing ${\cal V}_{1,n}$ submodules. Its derivation parallels that of
Theorem \ref{tw_podmodul} of the previous section.

\begin{TT}
 Any submodule of a Whittaker module of a type $\psi_{1,n}$ contains a Whittaker vector of the same type.
\label{tw_podmodul_1}
\end{TT}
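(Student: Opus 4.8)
The plan is to transcribe the proof of Theorem~\ref{tw_podmodul} to the pair ${\cal V}_{1,n}\subset{\cal V}$, using the dot-action of ${\cal V}_{1,n}$ defined on generators by $L_m\cdot\ket{v}=(L_m-\psi_{1,n}(L_m))\ket{v}$ for $m=1$ and $m\geqslant n$. First I would fix a submodule $S\subset V_{\psi_{1,n}}$ and a nonzero $\ket{v}\in S$, and form $F=U({\cal V}_{1,n})\cdot\ket{v}$, a ${\cal V}_{1,n}$-submodule under the dot-action spanned by all dot-monomials in the generators applied to $\ket{v}$; since $S$ is a submodule, $F\subset S$.

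Before the argument closes I must re-establish two facts for ${\cal V}_{1,n}$: (i) every generator is locally nilpotent under the dot-action (the counterpart of Lemma~\ref{lemat_nilpotent}), and (ii) $F$ is finite-dimensional with $L_m\cdot F=0$ for all sufficiently large $m$ (the counterpart of Lemma~\ref{lemat_finite_dim}). For the generators $L_m$ with $m\geqslant n$ the level/length bookkeeping of Lemmas~\ref{lemat_minus} and~\ref{lemat_kplus2r} carries over in the announced extended form: the dot-action strictly lowers the level of the negative part, so it is nilpotent, and for large $m$ it annihilates $F$ outright, which yields an integer $N$ with $L_m\cdot F=0$ for $m>N$.

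The one genuinely new ingredient, and the main obstacle relative to Theorem~\ref{tw_podmodul}, is the low-index generator $L_1$. Since $[L_1,L_k]=(1-k)L_{1+k}$ raises indices only by one and leaves the level of the positive part equal to zero, the naive length estimate does not drop after a single commutator; this is precisely why part~3 of Lemma~\ref{lemat_plus_1} is stated. I would invoke it to guarantee that after $m_\lambda$ iterated applications of $L_1$ the length of the positive part strictly decreases, because by then every index has been pushed up to $n$, where $L_n$ is absorbed as the scalar $\psi_{1,n}(L_n)$. This furnishes local nilpotency of $L_1$ and, combined with the $m\geqslant n$ case, completes both (i) and (ii).

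With (i) and (ii) in hand the conclusion is immediate. The quotient ${\cal V}_{1,n}/{\cal V}_N$ is a finite-dimensional Lie algebra acting on the finite-dimensional space $F$ by the induced dot-action, and by (i) each generator acts nilpotently. Engel's theorem then provides a nonzero $\ket{w'}\in F$ annihilated by the dot-action of every $L_m$, $m=1$ and $m\geqslant n$, i.e.\ $L_m\ket{w'}=\psi_{1,n}(L_m)\ket{w'}$ for those $m$. Thus $\ket{w'}$ is a Whittaker vector of type $\psi_{1,n}$ contained in $S$, which is what we set out to find.
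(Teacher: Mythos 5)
Your proposal is correct and follows essentially the same route the paper intends: the paper gives no separate proof, stating only that the derivation parallels that of Theorem~\ref{tw_podmodul}, and your transcription (dot-action, local nilpotency including the $L_1$ case via part~3 of Lemma~\ref{lemat_plus_1}, finite-dimensionality of $U(\mathcal{V}_{1,n})\cdot\ket{v}$, and Engel's theorem applied to $\mathcal{V}_{1,n}/\mathcal{V}_N$) is exactly that parallel. You correctly isolate the only genuinely new point, the locally nilpotent dot-action of $L_1$, which is precisely why the paper states the third item of Lemma~\ref{lemat_plus_1}.
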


\section{Gaiotto and BMT states}

We denote by $V_{c,\Delta}$  the Virasoro algebra Verma module of the central charge $c$ and the conformal weight $\Delta$
generated by the highest weight vector
$\ket{\Delta}$:
  \begin{eqnarray*}
z \ket{\Delta} &=& c \ket{\Delta},\;\;\; L_{0}\ket{\Delta} \;=\; \Delta \ket{\Delta},
\;\;\;
 L_{n}\ket{\Delta} \;=\; 0 \hspace{10pt}
   {\rm for} \hspace{10pt} n \geqslant 1.
  \end{eqnarray*}
  Let $V_{c,\Delta}= \bigoplus\limits_{n=0}^\infty V^n_{c,\Delta}$ be the $L_0$ eigenspace decomposition.
In each $n$-level subspace $V^n_{c,\Delta}$ there are two different standard bases :
\begin{eqnarray*}
  &&\{L_{-i_1} \hdots L_{-i_k}\ket{\Delta} : i_1 \geqslant \hdots \geqslant i_k>0, \; \sum_{j=1}^k i_j=n, \; k\in \mathbb{N}  \}\,
  ,
  \\
  &&\{L_{-i_k} \hdots L_{-i_1}\ket{\Delta} : i_1  \geqslant \hdots \geqslant  i_k>0,\; \sum_{j=1}^k i_j=n, \; k\in \mathbb{N}  \}\,
  .
\end{eqnarray*}
We consider the space $V^*_{c,\Delta}$  of all anti-linear forms on $V_{c,\Delta}$.
It is a left $\cal V$ module with respect to the action defined by
 \begin{eqnarray*}
 (L_n f)\ket{u} &=& f(L_{-n}\ket{u})\;
,\;\;\;
 (zf)\ket{u} \,=\, f(z \ket{u}).
 \end{eqnarray*}
Let us denote by $V^{n\,*}_{c,\Delta}$
the space of all  anti-linear forms on  $V^n_{c,\Delta}$.
There are two  bases in $V^{n\,*}_{c,\Delta}$, dual to the standard bases in $V^n_{c,\Delta}$
\begin{eqnarray}
\label{Ibasis}
f^{[k^{n_k},\dots,1^{n_1}]}(L^{m_k}_{-k}\dots L^{m_1}_{-1}\ket{\Delta}) &=& \delta^{n_k}_{m_k}\dots \delta^{n_1}_{m_1},
\\
\label{IIbasis}
f^{[1^{n_1},\dots,k^{n_k}]}(L^{m_1}_{-1} \dots L^{m_k}_{-k}\ket{\Delta}) &=& \delta^{n_k}_{m_k}\dots \delta^{n_1}_{m_1}.
\end{eqnarray}
Each form $f \in V^*_{c,\Delta}$ is uniquely determined by an infinite sequence $\{f_n \}_{n=0}^\infty$
of forms  $f_n\in V^{n\,*}_{c,\Delta}$ which can  in order be  decomposed in one of the dual bases.
This yields two different representations of a form $f \in V^*_{c,\Delta}$ as  the infinite series
\begin{eqnarray}
\label{Arep}
f &=& \sum_{n=0}^\infty \sum_{\sum in_i = n} C_{n_k,\dots,n_1}f^{[k^{n_k},\dots,1^{n_1}]},
\\
\label{Brep}
f &=& \sum_{n=0}^\infty \sum_{\sum in_i = n} D_{n_1,\dots,n_k}f^{[1^{n_1},\dots,k^{n_k}]}.
\end{eqnarray}
Our first aim is to  find    in $V^*_{c,\Delta}$
 all Whittaker vectors  of the pair ${\cal V}_r \subset {\cal V}$
and the type
$$
\psi_r=(\mu_r,\dots,\mu_s,\dots),\;\;r\leqslant s\leqslant 2r,
$$
where $s={\rm rank} \, \psi_r$.
These are the forms $f_{\psi_r}\in V^*_{c,\Delta}$ satisfying
\begin{eqnarray}
 \label{def_whittaker_vector 1}
 L_k f_{ \psi_r} &=& 0\;\;\;\;\;\;\;\;\; {\rm for}\;\; s\leqslant k,
 \\
  \label{def_whittaker_vector 2}
 L_k f_{ \psi_r} &=& \mu_k  f_{ \psi_r}\;\; {\rm for}\;\; r\leqslant k \leqslant s.
 \end{eqnarray}
 Note that by construction  $z f=c f$ for all $f\in V^*_{c,\Delta}$.
 \begin{TT}
 \label{general_form_of_GW_vector}
  A form $f_{ \psi_r} \in V^*_{c,\Delta}$ is a Whittaker vector of the
  pair ${\cal V}_r\subset {\cal V}$  and the type $\psi_r=(\mu_r,\dots,\mu_s,\dots)$
 if and only if  it is of the form
 \begin{equation}
 \label{Th_form_of_whittaker_vector}
 f_{\psi_r} = \sum_{n_{r-1},\dots,n_1=0}^\infty
  A_{n_{r-1}, \hdots, n_{1}}f^{n_{r-1},\hdots, n_1}_{\psi_r},
 \end{equation}
 where
 $$
f^{n_{r-1},\dots, n_1}_{\psi_r} =
\sum_{n_s, \hdots, n_r=0}^\infty \mu_s^{n_s}\dots\mu_r^{n_r}
f^{[s^{n_s}, \dots , r^{n_r},(r-1)^{n_{r-1}}, \dots , 1^{n_1} ]}
$$
and $A_{n_{r-1}, \hdots, n_{1}}$ are arbitrary coefficients.
 \end{TT}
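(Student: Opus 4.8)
The goal is to characterize all anti-linear forms $f_{\psi_r} \in V^*_{c,\Delta}$ satisfying the Whittaker conditions \eqref{def_whittaker_vector 1} and \eqref{def_whittaker_vector 2}. The natural approach is to expand $f_{\psi_r}$ in the dual basis \eqref{Ibasis} adapted to the decreasing ordering, writing
\[
f_{\psi_r} = \sum_{n=0}^\infty \sum_{\sum i n_i = n} C_{\dots,n_r,\dots,n_1} f^{[\dots,r^{n_r},(r-1)^{n_{r-1}},\dots,1^{n_1}]},
\]
and then translate each operator condition $L_k f_{\psi_r} = \mu_k f_{\psi_r}$ (for $r \leqslant k \leqslant s$) and $L_k f_{\psi_r} = 0$ (for $k > s$) into a recursion relation among the coefficients $C$. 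The key computational input is how $L_k$ acts on the dual basis vectors $f^{[\dots]}$; by the defining action $(L_k f)\ket{u} = f(L_{-k}\ket{u})$, this is dual to the left-multiplication by $L_{-k}$ on the standard basis, which is governed by the Virasoro commutation relations \eqref{virasoro}. The central observation I expect to exploit is that, because the basis \eqref{Ibasis} orders generators with the most negative index on the left, applying $L_k$ with $k \geqslant r$ has a triangular effect: it relates a coefficient to one with a higher power $n_k$ of the generator $L_{-k}$ and lower powers elsewhere.

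First I would verify the "if" direction: substitute the proposed form \eqref{Th_form_of_whittaker_vector} into the Whittaker conditions and check directly that each $f^{n_{r-1},\dots,n_1}_{\psi_r}$ is annihilated by $L_k$ for $k > s$ and is an eigenvector with eigenvalue $\mu_k$ for $r \leqslant k \leqslant s$. The geometric-series structure in the exponents of $\mu_s,\dots,\mu_r$ is tailored precisely so that the eigenvalue condition $L_k f = \mu_k f$ produces a telescoping identity: acting with $L_k$ lowers the power $n_k$ by one and multiplies by a factor that is compensated by the $\mu_k^{n_k}$ weight, reproducing $f$ up to the factor $\mu_k$. This direction is essentially a bookkeeping verification once the action of $L_k$ on the basis is pinned down.

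For the "only if" direction, I would argue that the conditions \eqref{def_whittaker_vector 1}-\eqref{def_whittaker_vector 2} determine the coefficients $C$ with indices $n_s,\dots,n_r$ (the powers of the generators $L_{-s},\dots,L_{-r}$) completely in terms of the remaining "free" indices $n_{r-1},\dots,n_1$, and that the resulting closed form is exactly the proposed geometric weighting $\mu_s^{n_s}\cdots\mu_r^{n_r}$. Concretely, reading the recursion backwards, the eigenvalue condition for $L_s$ (the top rank) fixes each $C$ with $n_s > 0$ in terms of the one with $n_s$ decreased by one, forcing the factor $\mu_s$ per step; then the conditions for $L_{s-1},\dots,L_r$ in turn fix the dependence on $n_{s-1},\dots,n_r$. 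The coefficients $A_{n_{r-1},\dots,n_1}$ indexed by the powers of $L_{-(r-1)},\dots,L_{-1}$ remain unconstrained because no condition involves $L_k$ with $k < r$.

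The main obstacle will be controlling the "cross terms" in the action of $L_k$ on a monomial $L_{-s}^{n_s}\cdots L_{-1}^{n_1}\ket{\Delta}$: commuting $L_{-k}$ past the existing generators produces, besides the desired leading term that raises $n_k$, a collection of lower-order terms arising from the commutators $[L_{-k}, L_{-j}] = (j-k)L_{-k-j}$ and from the central extension. I expect to organize the argument by a suitable grading or filtration — most naturally by the total degree together with lexicographic order on the exponent tuples — so that these subleading contributions are strictly lower in the ordering and cannot interfere with the leading recursion. Establishing that this triangularity is genuine, so that the recursion truly decouples level by level and admits the unique geometric solution, is the technical heart of the proof; everything else reduces to the structure theorems already available from Section 2 together with direct manipulation of the commutation relations.
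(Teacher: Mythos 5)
Your plan coincides with the paper's proof: expand $f_{\psi_r}$ in the dual basis (\ref{Ibasis}), use the conditions $L_k f_{\psi_r}=0$ for $k>s$ to kill every coefficient attached to a monomial containing a generator $L_{-m}$ with $m>s$, and read the eigenvalue conditions $L_i f_{\psi_r}=\mu_i f_{\psi_r}$ as the recursion $C_{n_s,\dots,n_i+1,\dots,n_1}=\mu_i\, C_{n_s,\dots,n_i,\dots,n_1}$, whose general solution is exactly the stated geometric form with free coefficients $A_{n_{r-1},\dots,n_1}$. The only simplification you do not anticipate is that no grading or lexicographic filtration is needed to control the cross terms: since $i\geqslant r$ and $j\geqslant i+1$ force $i+j\geqslant 2r+1>s$, every commutator term produced in reordering $L_{-i}L_{-s}^{m_s}\cdots L_{-1}^{m_1}\ket{\Delta}$ involves a generator $L_{-m}$ with $m>s$ and is therefore annihilated by $f_{\psi_r}$ outright.
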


 \begin{proof}
 Let $f_{ \psi_r} \in V^*_{c,\Delta}$ be a form satisfying conditions (\ref{def_whittaker_vector 1}), (\ref{def_whittaker_vector 2}).
 Equations (\ref{def_whittaker_vector 1}) imply that  $f_{ \psi_r} $ vanishes on all basis vectors
 $L_{-i_1} \hdots L_{-i_n} \ket{\Delta}$ involving at least one generator $L_{-n}$ with $n>s$.
Hence $f_{ \psi_r} \in V^*_{c,\Delta}$ is of the following general form
$$
f_{\psi_r} = \sum_{n=0}^\infty \sum_{\sum in_i = n} C_{n_s,\dots,n_1}f^{[s^{n_s},\dots,1^{n_1}]}
=
\sum_{n_s,\dots, n_1=0 }^\infty C_{n_s,\dots,n_1}f^{[s^{n_s},\dots,1^{n_1}]}.
$$
Calculating the left hand  sides of equations (\ref{def_whittaker_vector 2}) on the basis vectors
$L_{-s}^{m_{s}} \hdots L_{-1}^{m_{1}}\ket{\Delta}$
one gets
 \begin{eqnarray*}
 L_i
 && \hspace{-26pt} f_{ \psi_r}
 (L_{-s}^{m_{s}} \hdots L_{-1}^{m_{1}}\ket{\Delta})
 = f_{ \psi_r}(L_{-i} L_{-s}^{m_{s}} \hdots L_{-1}^{m_{1}}\ket{\Delta})
 \\
 &=&
 \sum_{n_s,\dots, n_1=0 }^\infty C_{n_s,\dots,n_1}f^{[s^{n_s},\dots,1^{n_1}]}
 (L_{-s}^{n_{s}} \hdots L_{-i}^{n_{i}+1} \hdots L_{-1}^{n_{1}}\ket{\Delta})
 \\
 &=&
 C_{n_s, \hdots, n_{i}+1, \hdots, n_{1}}.
 \end{eqnarray*}
 Comparing with the right hand sides
 one gets
 $$
 C_{n_s, \hdots, n_{i+1}, \hdots, n_1} = \mu_i C_{n_s, \hdots, n_i, \hdots,n_1}
 $$
 for all $n_1,\dots,n_s$ and $0< i <r$. The most general solution of the conditions above
 takes the form
 $$
C_{n_s, \hdots, n_r,n_{r-1}, \hdots,n_1}= A_{n_{r-1}, \hdots, n_{1}}\mu_s^{n_s} \dots \mu_r^{n_r}
 $$
 with arbitrary coefficients $A_{n_{r-1}, \hdots, n_{1}}$.
 \end{proof}
In the case of BMT states  it is more convenient to work in basis (\ref{IIbasis}).
By similar calculations one gets
\begin{TT}
 \label{general_form_of_BMT_vector}
  A form $f_{ \psi_{1,n}} \in V^*_{c,\Delta}$ is a Whittaker vector of the
  pair ${\cal V}_{1,n}\subset {\cal V}$  and the type $\psi_{1,n}$
 if and only if  it is of the form
 \begin{equation}
 \label{Th_form_of_BTM_vector}
 f_{\psi_{1,n}} = \sum_{m_2,\dots,m_{n-1}=0}^\infty
  B_{m_{2}, \hdots, m_{n-1}}f^{ m_2,\hdots,m_{n-1}}_{\psi_{1,n}},
 \end{equation}
 where
 $$
f^{ m_2,\hdots,m_{n-1}}_{\psi_{1,n}} =
\sum_{m_1, m_n=0}^\infty \nu_1^{m_1}\nu_n^{m_n}
f^{[1^{m_1},2^{m_2}, \dots , (n-1)^{m_{n-1}},n^{m_n}]}
$$
$\nu_1=\psi_{1,n}(L_1)$, $\nu_n=\psi_{1,n}(L_n)$ and $B_{m_{2}, \hdots, m_{n-1}}$ are arbitrary coefficients.
 \end{TT}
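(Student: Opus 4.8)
The plan is to mirror the proof of Theorem \ref{general_form_of_GW_vector}, but to carry the whole computation out in the ordered basis (\ref{IIbasis}) rather than (\ref{Ibasis}). Writing $\nu_1=\psi_{1,n}(L_1)$ and $\nu_n=\psi_{1,n}(L_n)$, the defining conditions for a Whittaker vector $f_{\psi_{1,n}}\in V^*_{c,\Delta}$ read $L_1 f=\nu_1 f$, $L_n f=\nu_n f$ and $L_k f=0$ for $k>n$ (the last because $\psi_{1,n}(L_k)=0$ for $k>n$). The reason basis (\ref{IIbasis}) is the convenient one is that, under the dual action $(L_m f)\ket{v}=f(L_{-m}\ket{v})$, the two generators $L_1$ and $L_n$ increment exactly the two \emph{extreme} modes $L_{-1}$ and $L_{-n}$ of an ordered monomial, while the intermediate generators $L_2,\dots,L_{n-1}$ — on which no Whittaker condition is imposed — remain unconstrained. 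This asymmetry is precisely what appears in (\ref{Th_form_of_BTM_vector}).

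The first step is the support argument. The conditions $L_k f=0$ for $k>n$ should force $f$ to vanish on every vector carrying a mode $L_{-k}$ with $k>n$, and hence to be expressible as $f=\sum_{m_1,\dots,m_n} D_{m_1,\dots,m_n}\, f^{[1^{m_1},\dots,n^{m_n}]}$, i.e. supported on pseudo-partitions with all parts $\le n$. As in Theorem \ref{general_form_of_GW_vector} the annihilation statement itself is immediate (any vector with a leading most-negative $L_{-k}$, $k>n$, is killed), and one then has to verify that the dual basis-(\ref{IIbasis}) forms indexed by parts $\le n$ span exactly this annihilator.

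The second step is to extract the recurrences by evaluating the two eigenvalue equations on the ordered basis vectors $L_{-1}^{m_1}\dots L_{-n}^{m_n}\ket{\Delta}$. For $L_1$ one has $(L_1 f)(v)=f(L_{-1}v)$; since $L_{-1}$ is the least negative mode it already sits at the left end of the ordered monomial, requiring no reordering, so $D_{m_1+1,\dots}=\nu_1 D_{m_1,\dots}$. For $L_n$ one has $(L_n f)(v)=f(L_{-n}v)$, and commuting the prepended $L_{-n}$ rightward to its place produces only corrections proportional to $L_{-(n+j)}$ with $n+j>n$ (the central terms in (\ref{virasoro}) vanish, their $\delta$ having nonzero argument), all annihilated by the support established above; hence $D_{\dots,m_n+1}=\nu_n D_{\dots,m_n}$. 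Solving these two geometric recurrences yields $D_{m_1,\dots,m_n}=\nu_1^{m_1}\nu_n^{m_n}\,B_{m_2,\dots,m_{n-1}}$ with the $B_{m_2,\dots,m_{n-1}}$ arbitrary, which is exactly (\ref{Th_form_of_BTM_vector}); the converse direction (that every such $f$ satisfies all three conditions) is then a direct check, the $L_k f=0$ for $k>n$ following again from the support.

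I expect the genuinely delicate point to be the support/basis bookkeeping, not the recurrences. The recurrences are clean precisely because $L_1$ and $L_n$ act on the two ends of the basis-(\ref{IIbasis}) monomials, so the reordering corrections either do not arise (for $L_1$) or are pushed to modes strictly below $-n$ and thus killed (for $L_n$); by contrast, had we tried to constrain $L_2,\dots,L_{n-1}$ the corrections would land on modes $\ge -n$ and mix coefficients, which is why those directions must be left free. The main obstacle is therefore translating the basis-independent annihilation statement into the spanning claim for the dual forms of parts $\le n$, which rests on the triangular relation between the two PBW orderings (\ref{Ibasis}) and (\ref{IIbasis}) — reversing a partition and discarding the higher-mode merge terms gives a unitriangular change of basis level by level.
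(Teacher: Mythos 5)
Your proposal is correct and is essentially the paper's own argument: the paper gives no separate proof of this theorem, stating only that ``by similar calculations'' in basis (\ref{IIbasis}) one adapts the proof of Theorem \ref{general_form_of_GW_vector}, and your write-up is exactly that adaptation (support forced onto parts $\leqslant n$ by $L_kf=0$ for $k>n$, then the two geometric recurrences from $L_1$ and $L_n$ acting on the two ends of the ordered monomials, with reordering corrections killed because they carry modes below $-n$). You also correctly isolate the one point the paper leaves implicit, namely that the annihilator condition translates into the same support statement in either PBW ordering via the unitriangular change of basis.
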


Theorems \ref{general_form_of_GW_vector} and \ref{general_form_of_BMT_vector} provide a general
construction  of all Gaiotto and BMT states in terms of finite or infinite combinations of the basic  forms
$f^{n_{r-1},\dots, n_1}_{\psi_r} $
and $f^{ m_2,\hdots,m_{n-1}}_{\psi_{1,n}}$, respectively.
By the results of the Section 2 in the case of the high rank Gaiotto states
all Whittaker modules of a given type are isomorphic. The states can be in principle further
characterized by the transformation properties with respect to the lower generators.
Our basic states are not convenient  from this point of view.
For the simplest Gaiotto states
one has
\begin{eqnarray*}
L_0 f_{ \psi_r}^{0, \hdots, 0}
&=& \Bigl( \Delta + \sum_{l=r}^{s}l\mu_{l} {\partial\over \partial \mu_{l}} \Bigr)
f_{ \psi_r}^{0, \hdots, 0},
\\
L_i f_{ \psi_r}^{0, \hdots, 0}
&=& \sum_{l=r}^{s-i}(l-i)\mu_{i+l} {\partial\over \partial \mu_{l}}f_{ \psi_r}^{0, \hdots, 0},\;\;\;\;1 \leqslant i \leqslant r-1.
\end{eqnarray*}
The corresponding formulae for generic basic forms
contain  several complicated terms and are not especially illuminating.
For the basic BMT states the transformation rules  are much less transparent even in the simplest case.

It is not clear which parts of the enormous spaces of states found in this section are relevant for the AGT relation and the CFT itself.
We are not aware of any explicit construction of higher order Gaiotto states in this context.
For the pair ${\cal V}_{1,n}\subset {\cal V}$ the only examples are the recently discovered BMT states
\cite{Bonelli:2011kv} which in our notation are of the form
$$
\sum_{m_2,\dots,m_{n-1}=0}^\infty
  \lambda_2^{m_{2}} \hdots \lambda_{n-1}^{m_{n-1}}f^{ m_2,\hdots,m_{n-1}}_{\psi_{1,n}}.
$$

\section*{Acknowledgments}

This work was partially financed by means of NCN granted by decision DEC2011/01/B/ST1/01302.

\end{document}